\documentclass[11p,reqno]{amsart}

\topmargin=0cm\textheight=22cm\textwidth=15cm
\oddsidemargin=0.5cm\evensidemargin=0.5cm
\setlength{\marginparwidth}{2cm}
\usepackage[T1]{fontenc}
\usepackage{graphicx}
\usepackage{amssymb,amsthm,amsmath,mathrsfs,bm,braket,marginnote}
\usepackage{enumerate}
\usepackage{appendix}
\usepackage[colorlinks=true, pdfstartview=FitV, linkcolor=blue, citecolor=blue, urlcolor=blue]{hyperref}
\usepackage{multirow}

\usepackage{pgf}
\usepackage{pgfplots}
\usepackage{tikz}
\usetikzlibrary{arrows,calc}
\usepackage{verbatim}
\usetikzlibrary{decorations.pathreplacing,decorations.pathmorphing}
\usepackage[numbers,sort&compress]{natbib}
\usepackage{dsfont}

\numberwithin{equation}{section}
\linespread{1.2}
\newtheorem{theorem}{Theorem}[section]

\newtheorem{remark}[theorem]{Remark}
\newtheorem{corollary}[theorem]{Corollary}

\newtheorem{proposition}[theorem]{Proposition}

 \reversemarginpar

\begin{document}
	
\title[MOPs and non-abelian Volterra lattice]{Matrix orthogonal polynomials, non-abelian Toda lattice and B\"acklund transformation}

\subjclass[2020]{39A36,~15A15}
\date{}

\dedicatory{}

\keywords{matrix orthogonal polynomials; non-abelian Toda lattice; B\"acklund transformation; quasi-determinant technique}

\author{Shi-Hao Li}
\email{shihao.li@scu.edu.cn}
\address{Department of Mathematics, Sichuan University, Chengdu, 610064, China}

\begin{abstract}
\noindent A connection between matrix orthogonal polynomials and non-abelian integrable lattices is investigated in this paper. The normalization factors of matrix orthogonal polynomials expressed by quasi-determinant are shown to be solutions of non-abelian Toda lattice in semi-discrete and full-discrete cases. Moreover, with a moment modification method, we demonstrate that the B\"acklund transformation of non-abelian Toda given by Popowicz is equivalent to the non-abelian Volterra lattice, whose solutions could be expressed by quasi-determinants as well.
\end{abstract}

\maketitle

\section{Introduction}
Connections between classical integrable systems, orthogonal polynomials and matrix integrals have been largely investigated due to the development of string theory, conformal field theory and random matrix theory \cite{deift03,forrester10}. Toda lattice, as one of the most celebrated integrable systems, has attracted much attention due to its intimate relation to Hermitian matrix model with unitary invariance. Such a connection was revealed by making use of orthogonal polynomials method and  examples could be found in \cite{gerasimov91,adler95}.
If a family of monic orthogonal polynomials $\{P_n(x;t)\}_{n\in\mathbb{N}}$ which satisfy orthogonal relation
\begin{align*}
\langle P_n(x;t), P_m(x;t)\rangle=e^{\phi_n(t)}\delta_{n,m},\quad \langle x^i,x^j\rangle=\int_{\mathbb{R}}x^{i+j}\exp\left(\sum_{k=1}^\infty t_ix^i\right)d\mu(x),
\end{align*}
is introduced,
then it could be shown that the time-dependent normalizations $\{\exp(\phi_n(t))\}_{n\in\mathbb{N}}$ obey the Toda lattice
\begin{align*}
\partial_{t_1}^2\phi_n=e^{\phi_{n+1}-\phi_n}-e^{\phi_n-\phi_{n-1}}.
\end{align*}
Although solutions of Toda lattice were directly expressed by determinant, one noticed that they could be alternatively expressed by the time-dependent partition function of Hermite matrix model with unitary invariance by taking advantage of  Andrei\'ef formula.

Due to the success in Toda lattice and orthogonal polynomials,
there have been many considerations about the interplay between orthogonality and integrability. For example, orthogonal polynomials on the unit circle, skew orthogonal polynomials and corresponding integrable systems have been considered in different literatures \cite{adler99,adler03}, and several novel orthogonality such as Cauchy bi-orthogonal polynomials and partial skew orthogonal polynomials were proposed in the studies of integrable systems \cite{chang182,lundmark05}. Besides, multivariate orthogonal polynomials and multiple orthogonal polynomials have been successfully found within applications to integrable systems \cite{adler09,ari14}. 

In this paper, we mainly focus on matrix orthogonal polynomials and related non-abelian integrable systems. Matrix orthogonal polynomials $\{P_n(x),Q_n(x)\}_{n\in\mathbb{N}}$ are defined by a non-negative matrix-valued weight function $W(x)$, and they obey the following orthogonal relation with respect to the weight
\begin{align}\label{bor1}
\int_\gamma P_n(x)W(x)Q_n(x)dx=H_n\delta_{n,m}.
\end{align}
This is a non-commutative generalization of standard (bi-)orthogonal polynomials and related matrix-valued moment problem could date back to Krein \cite{krein49}. In recent years, there have been a few recent publications concerning applications of matrix-valued orthogonal polynomials into different subjects such as approximation problems on Riemann sphere \cite{bertola21}, quasi-birth-and-death processes \cite{grunbaum08} and random tiling problems \cite{duits20}. Among those, the relation between matrix orthogonal polynomials and integrable systems is always a key topic \cite{alvarez17,branquinho20,cafasso13,ismail19,miranian05}. In particular, in \cite{alvarez17}, a non-abelian integrable systems have already been studied by introducing a Christoffel transformation of matrix orthogonal polynomials. The non-commutative Gauss-Borel decomposition was applied to the moment matrix and some exact formulas are given to those matrix orthogonal polynomials in terms of quasi-determinants, which suggest us that  there should be a non-commutative tau function  theory and some simple Toda-type lattices could be found by making use of matrix orthogonal polynomials and quasi-determinant technique. 

\subsection{Statement of results}
This paper mainly focuses on the interplay between matrix orthogonal polynomials with symmetric measure and non-abelian integrable lattices, including semi/full-discrete non-abelian Toda lattices and a semi-discrete non-abelian Volterra lattice. Although the semi-discrete non-abelian case has been studied in literatures, our method is to connect it with matrix orthogonal polynomials,  to provide a Lax pair together with exact solutions. 
To be precise, we set the weight function $W(x)$ to be symmetric, from which the bi-orthogonal relation \eqref{bor1} reduces to orthogonal relation, i.e. $P_n(x)=Q_n(x)$.
Importantly, those matrix orthogonal polynomials with symmetric weight admit the following three term recurrence relation
\begin{align*}
xP_n(x)=P_{n+1}(x)+a_nP_n(x)+b_nP_{n-1}(x),
\end{align*}
in which $a_n$ and $b_n$ are expressed by quasi-determinants (see Proposition \ref{prop1}).  If we denote $\Phi$ as a column vector composed by $(P_0(x),P_1(x),\cdots)$, then we have the matrix form \footnote{$(P_0,P_1,\cdots)^\top$ means that we take the transpose of the vector $(P_0,P_1,\cdots)$ but keep the elements invariant.}
\begin{align*}
x\Phi(x)=(\Lambda+A\mathbb{I}_p+B\Lambda^{-1})\Phi(x):=\mathcal{L}\Phi(x),\quad \Phi=(P_0(x),P_1(x),\cdots)^
\top,
\end{align*}
where $\Lambda$ is a  block shift operator defined by $
\Lambda\Phi_n=\Phi_{n+1},
$ and $\Phi_n$ is the $n$-th block matrix element of $\Phi$, $A$ and $B$ are certain block diagonal matrices. Therefore, $\mathcal{L}$ is a block tridiagonal matrix (also called Jacobi matrix) with non-commutative elements.
Besides, if one introduces a time-dependent measure $d\mu(x;t)=\exp\left(
\sum_{i=1}^\infty t_ix^i
\right)d\mu(x)$, then evolutions of the wave function $\Phi(x;t)$ could be obtained. The $t_1$-flow corresponds to the non-abelian Toda lattice. By taking the derivative of orthogonal relation, one could get (see Proposition \ref{timeevo})
\begin{align*}
\partial_{t_1} P_n(x;t)=-b_nP_{n-1}(x;t),\quad \text{or} \quad \partial_{t_1}\Phi=-\mathcal{L}_-\Phi,
\end{align*}
where $\mathcal{L}_-$ is the strictly block lower triangular part of $\mathcal{L}$,
and the non-abelian Toda lattice is a result of compatibility condition
\begin{align*}
\partial_t\mathcal{L}=[\mathcal{L},\mathcal{L}_-],
\end{align*}
Moreover, higher order flows lead to other evolutions of wave function and $\partial_{t_n}\Phi=-(\mathcal{L}^n)_-\Phi$ could be constructed, from which a non-commutative Toda hierarchy is given by
\begin{align*}
\partial_{t_n}\mathcal{L}=[\mathcal{L},(\mathcal{L}^n)_-].
\end{align*}

In Section \ref{sec3}, to investigate the connection between full-discrete non-abelian Toda lattice and matrix orthogonal polynomials, we propose a concept of adjacent families of matrix orthogonal polynomials and study corresponding Christoffel and Geronimus transformations. We demonstrate that the Christoffel transformation is a direct result of quasi-determinant identity and takes the form (see Theorem \ref{ctt})
\begin{align*}
P_n^{(\ell)}(x)=xP_{n-1}^{(\ell+2)}(x)-\mathcal{A}_n^\ell P_{n-1}^{(\ell+1)}(x),
\end{align*}
  and the Geronimus one is obtained by orthogonality (see Theorem \ref{gtt})
  \begin{align*}
  xP_n^{(\ell+1)}(x)+P_{n+1}^{(\ell)}(x)+\mathcal{B}_n^\ell P_n^{(\ell)}(x),
  \end{align*}
  where $A_n^\ell$ and $B_n^\ell$ admit quasi-determinant expressions.
 Those discrete spectral transformations are used to construct the discrete Lax pair and thus a full-discrete non-abelian Toda lattice is obtained by the compatibility condition, which is also consistent with three term recurrence relation.

Although adjacent families of matrix orthogonal polynomials provide different solutions of the semi-discrete non-abelian Toda lattice, but how to find a unified auto-B\"acklund transformation is still a mystery for us. Inspired by the scalar case,
we make modifications on the measure such that corresponding polynomials are split into two different families according to the even and odd orders. Moreover, original three term recurrence relations are reduced to
\begin{align*}
xQ_{2n}(x)=Q_{2n+1}(x)+c_nQ_{2n-1}(x),\quad xQ_{2n+1}(x)=Q_{2n+2}(x)+d_nQ_{2n}(x),
\end{align*}
in which coefficients $c_n$ and $d_n$ are also given in terms of quasi-determinants (see Proposition \ref{prop4}). Thus the matrix-valued Lax operator could be expressed by $\mathcal{L}=\Lambda+\Lambda^{-1}\gamma$.
Furthermore, after the introduction of time flows, the non-abelian Volterra lattice
\begin{align*}
\partial_t\mathcal{L}=[\mathcal{L},(\mathcal{L}^2)_-]
\end{align*}
is obtained.
We give an exact quasi-determinant solution to this lattice and demonstrate that this equation links different solutions of non-abelian Toda lattice (see proposition \ref{prop5}). It is also shown that this is a natural discretization of non-abelian KdV equation.

\section{Quasi-determianants, matrix orthogonal polynomials and non-abelian Toda lattice}\label{sec2}

Non-commutative determinants are developed for different special cases such as quantum determinants and so on. In \cite{gelfand05}, the most non-commutative case called the quasi-determinants was introduced for matrices over free division rings. In the same paper, many applications of quasi-determinants are found, including non-commutative continued fractions, matrix orthogonal polynomials and non-abelian Toda lattices and so on. Later on, techniques of quasi-determinants were applied into non-commutative integrable systems such as non-commutative KP, KdV and Painlev\'e equations \cite{etingof98,gilson07,gilson072,li08,retakh10} and combinatorics \cite{di11}. In this section, we shall first give an introduction to quasi-determinant and make a connection with matrix orthogonal polynomials. When it comes to matrix orthogonal polynomials with symmetric measure, a three term recurrence relation could be given and a non-abelian Toda lattice is re-derived with the help of those polynomials. A Lax pair is constructed with wave function given in terms of matrix orthogonal polynomials. Moreover, we remark that this Lax pair coincides with the one in \cite[eq. (8.2)]{o'connell20}, which describes an evolution of a system of particles $X_1,\cdots,X_N$ in the space of invertible $p\times p$ matrices. As quasi-determinant is the main technique in this work, we give a brief introduction here.
\subsection{An introduction to quasi-determinant}
Consider an $n\times n$ matrix $A$ whose entries are defined in a non-commutative ring, $r_i^j$ represents the $i$th row of $A$ with the $j$th element removed, $c^i_j$ the $j$th column with the $i$th element removed and $A^{i,j}$ the submatrix obtained by removing the $i$th row and the $j$th column from $A$, then there are $n^2$ quasi-determinants, denoted as $|A|_{i,j}$ for $i,j=1,\cdots,n$ if all of its inverse $(A^{i,j})^{-1}$ exist, and recursively defined by
\begin{align*}
|A|_{i,j}=a_{i,j}-r_i^j(A^{i,j})^{-1}c_j^i:=
\left|\begin{array}{cc}
A^{i,j}&c_j^i\\
r_i^j&\boxed{a_{i,j}}\end{array}
\right|
,\quad A^{-1}=\left(
|A|_{j,i}^{-1}
\right)_{i,j=1}^n.
\end{align*}
Here the non-commutative ring is supposed to be a $p\times p$ matrix ring, to be in the frame of matrix-valued measure. In \cite[Sec. 1.6]{gelfand05}, quasi-determinants are used to solve a generalization of linear system with non-commutative coefficients.
\begin{proposition}\label{nls}
Let $A=(a_{ij})$ be an $n\times n$ matrix over a $p\times p$ matrix ring. Assume that all quasi-determinants $|A|_{ij}$ are defined and invertible. Then
\begin{align*}
\left\{\begin{array}{c}
a_{11}x_1+\cdots+a_{1n}x_n=\xi_1\\
\cdots\\
a_{n1}x_1+\cdots+a_{nn}x_n=\xi_n
\end{array}
\right.
\end{align*}
for some $x_i\in\mathbb{R}^{p\times p}$ if and only if
\begin{align*}
x_i=\sum_{j=1}^n |A|_{j,i}^{-1}\xi_j, \quad i=1,\cdots, n.
\end{align*}
\end{proposition}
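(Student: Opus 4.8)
The claim is a non-commutative Cramer's rule, and my plan is to reduce it to the quasi-determinant formula for the inverse that is recorded alongside the recursive definition above. First I would rewrite the system in block-matrix form as $A\mathbf{x}=\boldsymbol{\xi}$, where $\mathbf{x}=(x_1,\dots,x_n)^\top$ and $\boldsymbol{\xi}=(\xi_1,\dots,\xi_n)^\top$ have $p\times p$ block entries. The hypothesis that every quasi-determinant $|A|_{ij}$ is defined and invertible is exactly what guarantees that $A$ is invertible over the matrix ring, with $A^{-1}=\left(|A|_{j,i}^{-1}\right)_{i,j=1}^n$. Granting this, the equivalence is immediate: $A\mathbf{x}=\boldsymbol{\xi}$ holds if and only if $\mathbf{x}=A^{-1}\boldsymbol{\xi}$, and reading off the $i$-th block gives $x_i=\sum_{j=1}^n (A^{-1})_{ij}\,\xi_j=\sum_{j=1}^n |A|_{j,i}^{-1}\xi_j$, which is the asserted formula. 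Both directions of the ``if and only if'' then follow at once, the forward one from uniqueness of $A^{-1}\boldsymbol{\xi}$ and the converse by direct substitution.

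The substance therefore lies in justifying the inverse formula itself, which I would establish by induction on $n$. The base case $n=1$ reads $a_{11}x_1=\xi_1$ with $|A|_{11}=a_{11}$, giving $x_1=a_{11}^{-1}\xi_1$ directly. For the inductive step I would carry out one round of Gaussian elimination: using invertibility of the submatrix $A^{n,n}$, solve the first $n-1$ equations for $\mathbf{x}'=(x_1,\dots,x_{n-1})^\top$ as $\mathbf{x}'=(A^{n,n})^{-1}\bigl(\boldsymbol{\xi}'-c_n^n\,x_n\bigr)$ and substitute into the last equation. The coefficient of $x_n$ that survives is precisely $a_{n,n}-r_n^n(A^{n,n})^{-1}c_n^n=|A|_{n,n}$, matching the recursive definition, so that $|A|_{n,n}\,x_n=\xi_n-r_n^n(A^{n,n})^{-1}\boldsymbol{\xi}'$. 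The coefficient of $\xi_n$ in $x_n$ is then $|A|_{n,n}^{-1}$, exactly the $(n,n)$ entry predicted by the inverse formula, and tracking the contribution of each $\xi_j$ with $j<n$ recovers the remaining entries of the last column of $A^{-1}$. Relabeling rows and columns yields every block $x_i$, closing the induction.

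The main obstacle will be the bookkeeping of non-commutativity throughout the elimination: each cancellation must be effected by left-multiplication in the correct order, and one cannot invoke multiplicativity of a scalar determinant to simplify. Concretely, I expect the delicate point to be showing that the off-diagonal coefficients produced by elimination, namely the blocks of $-|A|_{n,n}^{-1}\,r_n^n(A^{n,n})^{-1}$, coincide with the genuine quasi-determinants $|A|_{j,n}^{-1}$ of the full matrix rather than merely with quasi-determinants of $A^{n,n}$. Securing this heredity/Schur-complement identity for quasi-determinants, which relates the quasi-determinants of a submatrix to those of $A$, is the crux; once it is in hand the inductive hypothesis applies verbatim and the proposition follows.
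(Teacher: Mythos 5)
Your argument is correct in outline, but note that the paper itself offers no proof of this proposition: it is quoted verbatim from \cite{gelfand05} (Sec.~1.6), so there is no internal proof to match. Your first paragraph is, in effect, a complete proof \emph{relative to the facts the paper already records}: the inversion formula $A^{-1}=\bigl(|A|_{j,i}^{-1}\bigr)_{i,j=1}^n$ is stated right alongside the recursive definition of quasi-determinants, and once it is granted the equivalence $A\mathbf{x}=\boldsymbol{\xi}\iff \mathbf{x}=A^{-1}\boldsymbol{\xi}$ gives both directions immediately. Your second and third paragraphs go further and try to \emph{derive} that inversion formula by block Gaussian elimination; this is the standard route, and you correctly isolate the one nontrivial step, namely that the elimination coefficient $-|A|_{n,n}^{-1}\,r_n^n(A^{n,n})^{-1}$ reproduces the genuine quasi-determinants $|A|_{j,n}^{-1}$ of the full matrix. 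As written, that step is announced rather than proved, so the deeper verification is incomplete; to close it you would invoke precisely the homological relations \eqref{homological} (equivalently, the row/column expansion of quasi-determinants or the non-commutative Jacobi identity \eqref{ncjacobi}), or simply cite Theorem~1.2.1 and Section~1.6 of \cite{gelfand05}. In short: for the purposes of this paper your reduction suffices and is the intended reading; the extra induction is a reasonable but unfinished reconstruction of the cited result, with the crux honestly flagged rather than established.
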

There are some important properties of quasi-determinants (for details, please refer to \cite{krob95,gelfand05,gilson07,li08}). One is the non-commutative Jacobi identity 
\begin{align}\label{ncjacobi}
\left|\begin{array}{ccc}
A&B&C\\
D&f&g\\
E&h&\boxed{i}\end{array}
\right|=\left|\begin{array}{cc}
A&C\\E&\boxed{i}
\end{array}
\right|-\left|\begin{array}{cc}
A&B\\E&\boxed{h}
\end{array}
\right|\left|\begin{array}{cc}
A&B\\D&\boxed{f}
\end{array}
\right|^{-1}\left|\begin{array}{cc}
A&C\\
D&\boxed{g}
\end{array}
\right|,
\end{align}
which is a generalization of Jacobi identity in commutative case, while the latter is one of the most important identities in soliton theory, especially in Hirota's bilinear method. Besides, homological relations in terms of quasi-Pl\"ucker coordinates would be helpful in deriving discrete lattices. By introducing the quasi-Pl\"ucker coordinates, one can easily find the homological relations
\begin{align}\label{homological}
\begin{aligned}
\left|\begin{array}{ccc}
A&B&C\\
D&f&g\\
E&\boxed{h}&i
\end{array}
\right|=\left|\begin{array}{ccc}
A&B&C\\
D&f&g\\
E&h&\boxed{i}
\end{array}
\right|\left|\begin{array}{ccc}
A&B&C\\
D&f&g\\
0&\boxed{0}&1
\end{array}
\right|,\\
\left|\begin{array}{ccc}
A&B&C\\
D&f&\boxed{g}\\
E&h&i
\end{array}
\right|=\left|\begin{array}{ccc}
A&B&0\\
D&f&\boxed{0}\\
E&h&1
\end{array}
\right|
\left|\begin{array}{ccc}
A&B&C\\
D&f&g\\
E&h&\boxed{i}
\end{array}
\right|.
\end{aligned}
\end{align}

According to the viewpoint of classical integrable system (commutative case), determinantal identities acting on tau functions
could be regarded a flow in the Grassmannian, and derivatives of tau function are important in the findings of integrable systems.
Since quasi-determinants work as tau functions in non-abelian integrable systems, 
 it is necessary to introduce derivative formulas for quasi-determinants. Assuming that $A$, $B$, $C$ and $d$ are functions of  dependent $t$, then we have 
\begin{align*}
\left|\begin{array}{cc}
A&B\\
C&\boxed{d}
\end{array}
\right|'=d'-C'A^{-1}B-CA^{-1}B'+CA^{-1}A'A^{-1}B,
\end{align*}
where $'$ means the derivative with respect to $t$. Moreover, if we insert the identity matrix expressed by $
\sum_{k=0}^{n-1} e_k^\top e_k
$
into above formula,
where $e_k$ is a row vector whose $(k+1)$-th element is 1 and others are 0, then we have
\begin{align}\label{de2}
\begin{aligned}
\left|\begin{array}{cc}
A&B\\C&\boxed{d}
\end{array}
\right|'&=\left|\begin{array}{cc}
A&B\\C'&\boxed{d'}
\end{array}
\right|+\sum_{k=1}^n\left|\begin{array}{cc}
A&e_k^\top\\
C&\boxed{0}
\end{array}
\right|\left|\begin{array}{cc}
A&B\\
(A^k)'&\boxed{(B^k)'}
\end{array}
\right|\\
&=\left|\begin{array}{cc}
A&B'\\
C&\boxed{d'}
\end{array}
\right|+\sum_{k=1}^n \left|\begin{array}{cc}
A&(A_k)'\\C&\boxed{(C_k)'}
\end{array}
\right|\left|\begin{array}{cc}
A&B\\e_k&\boxed{0}
\end{array}
\right|,
\end{aligned}
\end{align}
where $A^k$ and $A_k$ denote the $k$th row and column of matrix $A$ respectively.
In particular, as Hankel quasi-determinant plays an important role in non-commutative Toda lattice and 
Painlev\'e equation \cite{gelfand05,retakh10}, we give a derivative formula for Hankel quasi-determinant. A Hankel quasi-determinant is defined by
\begin{align*}
H_n:=\left|\begin{array}{cc}
\Lambda_{n-1}&(\theta_n^n)^\top\\
\theta_n^n&\boxed{m_{2n}}
\end{array}
\right|,\quad \Lambda_{n-1}=(m_{i+j})_{i,j=0}^{n-1},\quad \theta_i^j=(m_i,\cdots,m_{i+j-1}),
\end{align*}
with $m_i=m_i^\top$ required. Obviously, Hankel quasi-determinant has the property $H_n=H_n^\top$. Moreover, by applying derivative formula \eqref{de2}, one could get
\begin{align}\label{derivative}
\begin{aligned}
\partial_t H_n&=\left|\begin{array}{cc}
\Lambda_{n-1}&(\theta_n^n)^\top\\
\theta_{n+1}^n&\boxed{m_{2n+1}}
\end{array}
\right|+\left|\begin{array}{cc}
\Lambda_{n-1}&e_{n-1}^\top\\
\theta_n^n&\boxed{0}
\end{array}
\right|H_n\\
&=\left|\begin{array}{cc}
\Lambda_{n-1}&(\theta_{n+1}^n)^\top\\
\theta_n^n&\boxed{m_{2n+1}}
\end{array}
\right|+H_n\left|\begin{array}{cc}
\Lambda_{n-1}&(\theta_n^n)^\top\\
e_{n-1}&\boxed{0}
\end{array}
\right|.
\end{aligned}
\end{align}

\subsection{Matrix orthogonal polynomials with symmetric measure}
Let's consider a bounded matrix-valued Borel measure $\mu$ on the real line, which is a map from $\mu: (-\infty,\infty)$ to $\mathbb{R}^{p\times p}$. 
This matrix-valued measure is the assignment of a positive semi-definite $p\times p$ matrix $\mu(X)$ to every Borel set $X$ which is countably additive, and it is normalized by assuming $\mu(\mathbb{R})=\mathbb{I}_{p}$, where $\mathbb{I}_p$ is a $p\times p$ identity matrix.
This measure is related to the weight function $M(x)$ in \eqref{bor1} by the Radon-Nikodym theorem, and please refer to \cite{damanik08} for details.

From this matrix-valued measure, an inner product can be defined by
\begin{align*}
\langle f(x),g(x)\rangle:=\int_{\mathbb{R}}f(x)d\mu(x)g^\top (x),
\end{align*}
where $f(x),\,g(x)\in\mathbb{R}^{p\times p}[x]$. A matrix-valued bi-orthogonal polynomials pair $\{P_n(x),Q_n(x)\}_{n\in\mathbb{N}}$ could be defined by this inner product with orthogonal relation
\begin{align*}
\langle P_n(x),Q_m(x)\rangle=H_n\delta_{n,m}
\end{align*}
for some positive definite matrix $H_n$.
This is a non-commutative generalization of standard bi-orthogonal polynomials.
In this paper, we constraint ourselves to a symmetric measure $\mu$ which satisfies $d\mu=d\mu^\top$. With such a constraint, those families of bi-orthogonal polynomials $\{P_n(x)\}_{n\in\mathbb{N}}$ and $\{Q_n(x)\}_{n\in\mathbb{N}}$ coincide with each other, and thus orthogonal relation can be reformulated by $\langle P_n(x),P_m(x)\rangle=H_n\delta_{n,m}$. Equivalently, 
\begin{align}\label{or1}
\int_{\mathbb{R}}P_n(x)d\mu(x)x^m=\int_{\mathbb{R}}x^m d\mu(x)P_n^\top(x)=H_n\delta_{n,m}.
\end{align}
On the other hand, from the matrix-valued measure $d\mu(x)$, a Hankel moment matrix $(m_{i+j})_{i,j\in\mathbb{N}}$ could be introduced with moments
$m_n:=\int_{\mathbb{R}}x^nd\mu(x)\in\mathbb{R}^{p\times p}$.
We always assume the matrix-valued measure is well defined, i.e. moments $\{m_n\}_{n=0,1,\cdots}$ are finite and its corresponding matrix $(m_{i+j})_{i,j=0,1,\cdots}$ are positive definite.

Moreover,
by assuming that $\{P_n(x)\}_{n\in\mathbb{N}}$ be monic polynomials, i.e.
\begin{align*}
P_n(x)&=\mathbb{I}_{p}x^n +a_{n,n-1}x^{n-1}+\cdots+a_{n,0}
\end{align*}
with $a_{n,i}\in\mathbb{R}^{p\times p}$ for $i=0,\cdots,n-1$, 
 then orthogonal relation \eqref{or1} is actually a linear system with non-commutative elements
\begin{align}\label{ls}
\sum_{i=0}^{n-1}a_{n,i}m_{i+j}=-m_{j+n},\quad j=0,\cdots,n-1.
\end{align}
According to Proposition \ref{nls}, those coefficients could be formulated by
\begin{align*}
(a_{n,0},\cdots,a_{n,n-1})=-(m_n,\cdots,m_{2n-1})\left(\begin{array}{cccc}
m_0&\cdots&m_{n-1}\\
\vdots&\cdots&\vdots\\
m_{n-1}&\cdots&m_{2n-2}\end{array}
\right)^{-1},
\end{align*}
from which we know that they could be written as quasi-determinants
\begin{align*}
a_{n,i}=\left|\begin{array}{cc}
\Lambda_{n-1}&e_i^\top\\
\theta_n^n&\boxed{0}
\end{array}
\right|,\quad \Lambda_{n-1}=(m_{i+j})_{i,j=0}^{n-1},\quad \theta^{j}_i=(m_i,\cdots,m_{i+j-1})
\end{align*}
where $e_i^\top$ is a block column vector whose $(i+1)$-th element is $\mathbb{I}_p$ and the others are 0, and thus
\begin{align*}
P_n(x)=\left|\begin{array}{cccc}
m_0&\cdots&m_{n-1}&\mathbb{I}_p\\
\vdots&&\vdots&\vdots\\
m_{n-1}&\cdots&m_{2n-1}&x^{n-1}\mathbb{I}_p\\
m_n&\cdots&m_{2n-1}&\boxed{x^n\mathbb{I}_p}
\end{array}
\right|.
\end{align*}
The algebraic structure of matrix orthogonal polynomials was revealed in \cite[Sec. 8.3]{gelfand95} and in \cite{miranian05}, it was given by the help of Schur complement. At the same time, the quasi-determinant expression is equivalent to a matrix version of LU decomposition in \cite{alvarez17}. Moreover, by inserting the quasi-determinant expression into orthogonal relation \eqref{or1}, we know that the normalization factor $H_n$ can be expressed in terms of Hankel quasi-determinant
\begin{align*}
H_n=\left|\begin{array}{cc}
\Lambda_{n-1}&(\theta_n^n)^\top\\
\theta_n^n&\boxed{m_{2n}}
\end{array}
\right|.
\end{align*}

Following \cite[Thm. 3.1]{sinap96} and \cite[lemma 2.6]{damanik08}, we know that these matrix orthogonal polynomials with symmetric measure admit three term recurrence relations. Moreover, in the next proposition, we demonstrate that coefficients in the three term recurrence relation can be expressed by quasi-determinants.

\begin{proposition}\label{prop1}
 Matrix orthogonal polynomials with symmetric measure $\{P_n(x)\}_{n\in\mathbb{N}}$ have following recurrence relation
\begin{align}\label{sp}
xP_n(x)=P_{n+1}(x)+a_nP_n(x)+b_nP_{n-1}(x),
\end{align}
where recurrence coefficients $a_n$, $b_n$ have quasi-determinantal expressions
\begin{align}\label{qd1}
a_n=\left(
\left|\begin{array}{cc}
\Lambda_{n-1}&(\theta_n^n)^\top\\
\theta_{n+1}^n&\boxed{m_{2n+1}}
\end{array}
\right|+\left|\begin{array}{cc}
\Lambda_{n-1}&e_{n-1}^\top\\
\theta_n^n&\boxed{0}
\end{array}
\right|H_n
\right)H_n^{-1},\quad b_n=H_{n}H_{n-1}^{-1}
\end{align}
\end{proposition}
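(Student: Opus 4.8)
The plan is to produce the recurrence from orthogonality and the monic normalization, and then to read off $a_n$ and $b_n$ separately; the coefficient $b_n$ will be immediate, while $a_n$ carries all of the quasi-determinant bookkeeping.

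First I would use that each $P_n$ is monic of degree $n$, so that $\{P_0,\dots,P_{n+1}\}$ is a left-basis for matrix polynomials of degree at most $n+1$ and we may write $xP_n(x)=P_{n+1}(x)+\sum_{k=0}^{n}c_{n,k}P_k(x)$ with $c_{n,k}\in\mathbb{R}^{p\times p}$, the monic leading term forcing the coefficient of $P_{n+1}$ to be $\mathbb{I}_p$. The essential point is that multiplication by $x$ is self-adjoint for the inner product: since $x$ is scalar and commutes with $d\mu$, $\langle xf,g\rangle=\int xf\,d\mu\,g^\top=\langle f,xg\rangle$. Pairing the expansion with $P_k$ and using $\langle P_j,P_k\rangle=H_k\delta_{j,k}$ gives $c_{n,k}H_k=\langle xP_n,P_k\rangle=\langle P_n,xP_k\rangle$, and because $xP_k$ has degree $k+1$ this vanishes whenever $k+1<n$. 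Thus only $k=n$ and $k=n-1$ survive, which is exactly \eqref{sp} with $a_n=c_{n,n}$ and $b_n=c_{n,n-1}$.

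For $b_n$ I would pair \eqref{sp} with $P_{n-1}$, giving $b_nH_{n-1}=\langle xP_n,P_{n-1}\rangle=\langle P_n,xP_{n-1}\rangle$; substituting $xP_{n-1}=P_n+a_{n-1}P_{n-1}+b_{n-1}P_{n-2}$ and discarding the orthogonal terms leaves $\langle P_n,P_n\rangle=H_n$, whence $b_n=H_nH_{n-1}^{-1}$ as claimed. For $a_n$, pairing with $P_n$ gives $a_nH_n=\langle xP_n,P_n\rangle=\int xP_n\,d\mu\,P_n^\top$. Writing $xP_n=\mathbb{I}_px^{n+1}+a_{n,n-1}x^{n}+(\text{lower order})$ and applying \eqref{or1}, which kills $\int x^m d\mu\,P_n^\top$ for every $m<n$, collapses this to $\int x^{n+1}d\mu\,P_n^\top+a_{n,n-1}H_n$. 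The second piece is already in the desired form, since the coefficient formula derived above gives $a_{n,n-1}=\left|\begin{smallmatrix}\Lambda_{n-1}&e_{n-1}^\top\\\theta_n^n&\boxed{0}\end{smallmatrix}\right|$. For the first piece I would expand $P_n^\top=\sum_{j=0}^{n}a_{n,j}^\top x^{j}$ and insert the explicit solution $(a_{n,0},\dots,a_{n,n-1})=-\theta_n^n\Lambda_{n-1}^{-1}$ from Proposition \ref{nls}; using the Hankel symmetry $\Lambda_{n-1}^\top=\Lambda_{n-1}$ (a consequence of $m_i=m_i^\top$) to transpose this relation, one finds $\int x^{n+1}d\mu\,P_n^\top=m_{2n+1}-\theta_{n+1}^n\Lambda_{n-1}^{-1}(\theta_n^n)^\top$, which by the very definition of a quasi-determinant equals $\left|\begin{smallmatrix}\Lambda_{n-1}&(\theta_n^n)^\top\\\theta_{n+1}^n&\boxed{m_{2n+1}}\end{smallmatrix}\right|$. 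Combining the two pieces and right-multiplying by $H_n^{-1}$ reproduces \eqref{qd1}.

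I expect the main obstacle to be precisely this last step for $a_n$: keeping the transposes straight and recognizing that the Hankel symmetry is what turns $\int x^{n+1}d\mu\,P_n^\top$ into the stated quasi-determinant. As a cross-check and alternative route, I note that the same quantity is $\partial_{t_1}H_n$ under the flow $\partial_{t_1}d\mu=x\,d\mu$: the $\partial_{t_1}P_n$ contributions to $\partial_{t_1}\langle P_n,P_n\rangle$ vanish because $\deg\partial_{t_1}P_n<n$, so $a_nH_n=\partial_{t_1}H_n$, and the derivative formula \eqref{derivative} then reproduces both quasi-determinants in one stroke.
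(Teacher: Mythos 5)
Your proposal is correct and follows essentially the same route as the paper: expand $xP_n$ in the basis $\{P_k\}$, use self-adjointness of multiplication by $x$ to kill all but the top three terms, read off $b_n=H_nH_{n-1}^{-1}$ from orthogonality, and identify $\langle xP_n,P_n\rangle=\langle x^{n+1},P_n\rangle+a_{n,n-1}H_n$ with the two stated quasi-determinants. You merely make explicit the transpose/Hankel-symmetry bookkeeping behind $\langle x^{n+1},P_n\rangle=m_{2n+1}-\theta_{n+1}^n\Lambda_{n-1}^{-1}(\theta_n^n)^\top$, which the paper leaves implicit, and your cross-check via $a_nH_n=\partial_{t_1}H_n$ matches the paper's remark following \eqref{nctoda}.
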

\begin{proof}
Since $\{P_n(x)\}_{n\in\mathbb{N}}$ act as a basis in the polynomial space, we have the following Fourier expansion
\begin{align*}
xP_n(x)=P_{n+1}(x)+\sum_{k=0}^n \alpha_{n,k}P_k(x)
\end{align*}
with coefficients
\begin{align*}
\alpha_{n,k}=\langle xP_n(x), P_k(x)\rangle\langle P_k(x),P_k(x)\rangle^{-1}.
\end{align*}
From orthogonal relation \eqref{or1}, one knows that when $k<n-1$, 
\begin{align*}
\langle xP_n(x),P_k(x)\rangle=\langle P_n(x),xP_k(x)\rangle=0.
\end{align*}
Moreover, one gets
\begin{align*}
a_n:=\alpha_{n,n}=\langle xP_n, P_n\rangle H_n^{-1},\quad b_n:=\alpha_{n,n-1}=H_n H_{n-1}^{-1},
\end{align*}
and 
\begin{align*}
\langle xP_n,P_n\rangle=\langle x^{n+1},P_n\rangle+a_{n,n-1}\langle x^n, P_n\rangle=\left|\begin{array}{cc}
\Lambda_{n-1}&(\theta_n^n)^\top\\
\theta_{n+1}^n&\boxed{m_{2n+1}}
\end{array}
\right|+\left|\begin{array}{cc}
\Lambda_{n-1}&e_{n-1}^\top\\
\theta_n^n&\boxed{0}
\end{array}
\right|H_n,
\end{align*}
which gives the quasi-determinant expression for coefficients in three term recurrence relation.
\end{proof}
\begin{remark}
In \cite{branquinho20}, the authors tried to consider a connection between matrix orthogonal polynomials and non-abelian integrable lattices. Unfortunately, in their remark 2.6, one of the coefficients is wrongly expressed by quasi-determinant, and their \cite[lemma 4.1]{branquinho20} should coincide with the ``bilinear form'' \eqref{nctoda}, which is clarified later. 
\end{remark}
Therefore, the recurrence relation of matrix orthogonal polynomials can induce the following Jacobi spectral problem
\begin{align*}
x\left(\begin{array}{c}
P_0(x)\\
P_1(x)\\
P_2(x)\\
\vdots\end{array}
\right)=\left(\begin{array}{ccccc}
a_0&\mathbb{I}_p&&&\\
b_1&a_1&\mathbb{I}_p&&\\
&b_2&a_2&\mathbb{I}_p&\\
&&\ddots&\ddots&\ddots\end{array}
\right)\left(\begin{array}{c}
P_0(x)\\
P_1(x)\\
P_2(x)\\
\vdots\end{array}
\right).
\end{align*}
 If we denote $\Phi$ as a column vector composed by $\{P_0(x),P_1(x),\cdots\}$, then above equation is $x\Phi=\mathcal{L}\Phi$, and $\mathcal{L}$ is the corresponding Jacobi matrix with matrix elements.

\subsection{Time evolutions and non-commutative Toda lattice}
To connect matrix orthogonal polynomials with non-abelian Toda lattice, we introduce time flows in the measure such that $\partial_t d\mu(x;t)=xd\mu(x;t)$.
In this case, we know that matrix orthogonal polynomials are also involved in variable $t$ and corresponding matrix moments satisfy evolutions $\partial_t m_n=m_{n+1}$.
Such an example could be interpreted as a symmetric matrix-valued Laguerre weight (in $2\times 2$ case)
\begin{align*}
d\mu(x;t)=\left(
\begin{array}{cc}
x^a&const.\\
 const.&x^b
\end{array}
\right)\exp(xt)\cdot\mathbf{1}_{0<x<\infty}dx,
\end{align*}
with the same constant in off-diagonal part and $\mathbf{1}_{0<x<\infty}$ is the indicator function which is taken as $1$ when $0<x<\infty$ and $0$ otherwise.
More examples could be found at \cite[Sec. 4.1.1]{bertola21} and \cite[Example 4.1]{branquinho20}.

Next we consider evolutions of those polynomials with respect to time variable $t$. 
\begin{proposition}\label{timeevo}
Under the assumption $\partial_t d\mu(x;t)=xd\mu(x;t)$, the matrix orthogonal polynomials $\{P_n(x;t)\}_{n\in\mathbb{N}}$ evolve as
\begin{align}\label{te}
\partial_t P_n(x;t)=-b_nP_{n-1}(x;t),\quad b_n=H_nH_{n-1}^{-1}.
\end{align}
\end{proposition}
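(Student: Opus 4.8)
The plan is to differentiate the orthogonality relation $\langle P_n,P_m\rangle=H_n\delta_{n,m}$ with respect to $t$ and read off the expansion coefficients of $\partial_t P_n$ in the basis $\{P_k\}$. Since each $P_n$ is monic with $x$-independent leading coefficient $\mathbb{I}_p$, the derivative $\partial_t P_n$ is a matrix polynomial of degree at most $n-1$, so I may write $\partial_t P_n=\sum_{k=0}^{n-1}c_{n,k}P_k$ with constant coefficients $c_{n,k}\in\mathbb{R}^{p\times p}$ placed on the left. Pairing against $P_m$ for $m\le n-1$ and using orthogonality gives $c_{n,m}=\langle\partial_t P_n,P_m\rangle H_m^{-1}$, so it suffices to evaluate $\langle\partial_t P_n,P_m\rangle$.

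First I would differentiate $\langle P_n,P_m\rangle$ directly. Because the $t$-dependence sits both in the polynomials and in the measure, and $\partial_t d\mu=x\,d\mu$ by hypothesis, I obtain
\[
\partial_t\langle P_n,P_m\rangle=\langle\partial_t P_n,P_m\rangle+\langle xP_n,P_m\rangle+\langle P_n,\partial_t P_m\rangle,
\]
where the middle term uses $\int P_n(x\,d\mu)P_m^\top=\langle xP_n,P_m\rangle$. Fix $m<n$. Then the left-hand side vanishes, and since $\partial_t P_m$ has degree at most $m-1<n$ it is orthogonal to $P_n$, killing the last term. This leaves $\langle\partial_t P_n,P_m\rangle=-\langle xP_n,P_m\rangle$. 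The scalar $x$ passes across the pairing, $\langle xP_n,P_m\rangle=\langle P_n,xP_m\rangle$, so for $m<n-1$ the polynomial $xP_m$ has degree at most $n-1<n$ and the pairing vanishes; hence $c_{n,m}=0$ for every $m<n-1$.

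It remains to compute the single surviving coefficient $c_{n,n-1}$. Here I would invoke the three term recurrence \eqref{sp} in the form $xP_{n-1}=P_n+a_{n-1}P_{n-1}+b_{n-1}P_{n-2}$ and pair with $P_n$; the terms carrying $a_{n-1}$ and $b_{n-1}$ drop out by orthogonality (the matrix coefficients emerge as transposes on the far side of the inner product but multiply already-vanishing pairings), leaving $\langle P_n,xP_{n-1}\rangle=\langle P_n,P_n\rangle=H_n$. Therefore $\langle\partial_t P_n,P_{n-1}\rangle=-H_n$ and $c_{n,n-1}=-H_nH_{n-1}^{-1}=-b_n$, which is exactly \eqref{te}.

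The only real bookkeeping hazard is non-commutativity: I must consistently keep the matrix coefficients $c_{n,k}$, $a_n$, $b_n$ on the left of the polynomials and track how they transpose when moved through $\langle\,\cdot\,,\cdot\,\rangle=\int\,\cdot\,d\mu\,(\cdot)^\top$, checking in particular that $\langle f,Cg\rangle=\langle f,g\rangle C^\top$. None of these passages creates an obstruction, because every coefficient encountered multiplies a pairing that is already zero; the argument thus reduces, step by step, to the familiar scalar computation.
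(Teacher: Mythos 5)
Your proof is correct and follows essentially the same route as the paper: differentiate the orthogonality relation $\langle P_n,P_m\rangle=0$ for $m<n$ using $\partial_t d\mu=x\,d\mu$, kill the $\langle P_n,\partial_t P_m\rangle$ term by degree counting, and observe that only the $m=n-1$ pairing survives, giving $-\langle P_n,xP_{n-1}\rangle H_{n-1}^{-1}=-H_nH_{n-1}^{-1}=-b_n$. The only cosmetic difference is that you recompute $\langle P_n,xP_{n-1}\rangle=H_n$ from the three-term recurrence, whereas the paper simply cites the identical computation already done in Proposition \ref{prop1}.
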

\begin{proof}
Since monic polynomials are considered, one could suppose that
\begin{align*}
\partial_t P_n(x;t)=\sum_{k=0}^{n-1}\beta_{n,k}P_k(x;t), \quad \beta_{n,k}=\langle \partial_t P_n(x;t), P_k(x;t)\rangle H_k^{-1}.
\end{align*}
Then by taking the derivative to the orthogonal relation
\begin{align*}
\int_{\mathbb{R}} P_n(x;t)d\mu(x;t)P_k^\top(x;t)=0,\quad \text{for $k<n$},
\end{align*}
one gets
\begin{align*}
\langle \partial_t P_n(x;t), P_k(x;t)\rangle+\langle P_n(x;t),xP_k(x;t)\rangle+\langle P_n(x;t),\partial_t P_k(x;t)\rangle=0.
\end{align*}
Obviously, the last term is equal to zero and thus \begin{align*}
\langle \partial_t P_n(x;t),P_k(x;t)\rangle=-\langle P_n(x;t),xP_k(x;t)\rangle.
\end{align*}
From previous knowledge, we know that $\langle P_n(x;t),xP_k(x;t)\rangle=0$ when $k<n-1$ and 
\begin{align*}
\beta_{n,n-1}=-H_nH_{n-1}^{-1}=-b_k,
\end{align*}
thus completing the proof.
\end{proof}
Therefore, the compatibility condition of the spectral problem (eq. \eqref{sp}) and time evolution (eq. \eqref{te}) gives rise to the non-abelian Toda equation
\begin{align*}
\partial_t a_n=b_{n+1}-b_n,\quad \partial_t b_n=a_nb_n-b_na_{n-1}, 
\end{align*}
which is equivalent to the well known formula \cite[Thm. 9.7.1]{gelfand05}
\begin{align}\label{nctoda}
\partial_t(\partial_t H_n\cdot H_n^{-1})=H_{n+1}H_n^{-1}-H_nH_{n-1}^{-1}
\end{align}
by noting that $a_n=(\partial_t H_n)H_n^{-1}$ according to equation \eqref{derivative} and
$b_n=H_nH_{n-1}^{-1}$. Those transformations look like dependent variable transformations in Hirota's method, and thus we call \eqref{nctoda} a bilinear form of non-abelian Toda lattice.

The Lax pair of this non-commutative Toda lattice is then given in terms of those matrix orthogonal polynomials
\begin{align*}
x\Phi=\mathcal{L}\Phi,\quad \partial_t \Phi=\mathcal{L}_-\Phi,
\end{align*}
where $\mathcal{L}_-$ means the strictly lower triangular part of $\mathcal{L}$ and its compatibility condition is \begin{align*}
\partial_t \mathcal{L}=[\mathcal{L},\mathcal{L}_-]
\end{align*} with commutator $[A,B]=AB-BA$.
\begin{remark}
 In \cite[Sec. 8.1]{o'connell20}, such a Lax pair is connected with following Hamitonian
\begin{align*}
H=\text{tr}\left(
\frac{1}{2}\sum_{i=0}^{N-1} a_i^2-\sum_{i=1}^{N-1}b_i
\right)
\end{align*}
with $a_i=\partial_t X_i(X_i)^{-1}$ and $b_i=X_{i+1}X_i^{-1}$, and those $\{X_i\}_{i=1}^N$ could be considered as being in the space of positive $p\times p$ real symmetric matrices.
\end{remark}
\subsection{Higher order time flows and non-commutative Toda lattice hierarchy}\label{ho}
This part is devoted to higher order time flows with respect to the spectral problem \eqref{sp}. Similar with what we have done in the last part, let's consider infinite time flows in the measure such that
\begin{align*}
d\mu(x;t)=\exp\left(\sum_{k=1}^\infty t_kx^k\right)d\mu(x)
\end{align*}
with $d\mu$ a symmetric matrix measure.
Such an assumption is of great use in deriving Toda lattice hierarchy in scalar case.

We first take $t_2$-flow as an example.
\begin{proposition}
Time-dependent matrix orthogonal polynomials $\{P_n(x;t)\}_{n\in\mathbb{N}}$ have the following evolution with regard to $t_2$-flow
\begin{align*}
\partial_{t_2}P_n(x;t)=-(a_nb_n+b_na_{n-1})P_{n-1}(x;t)-b_nb_{n-1}P_{n-2}(x;t).
\end{align*}
\end{proposition}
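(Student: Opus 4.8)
The plan is to follow the same strategy as in the proof of Proposition~\ref{timeevo}, now for the $t_2$-flow, where the measure evolves as $\partial_{t_2}d\mu(x;t)=x^2d\mu(x;t)$. Since each $P_n$ is monic of degree $n$ with constant leading coefficient $\mathbb{I}_p$, the derivative $\partial_{t_2}P_n$ is a polynomial of degree at most $n-1$, so I would expand it in the orthogonal basis as
\begin{align*}
\partial_{t_2}P_n(x;t)=\sum_{k=0}^{n-1}\beta_{n,k}P_k(x;t),\quad \beta_{n,k}=\langle\partial_{t_2}P_n,P_k\rangle H_k^{-1}.
\end{align*}
Differentiating the orthogonality relation $\langle P_n,P_k\rangle=0$ for $k<n$ and using that $\partial_{t_2}P_k$ has degree strictly below $n$ (so $\langle P_n,\partial_{t_2}P_k\rangle=0$), I would obtain $\langle\partial_{t_2}P_n,P_k\rangle=-\langle P_n,x^2P_k\rangle$, exactly as in the $t_1$ case but with one extra power of $x$.

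The key computational step is to evaluate $\langle P_n,x^2P_k\rangle$. Because $x$ is a scalar and the measure is symmetric, $x$ acts self-adjointly, so I would rewrite $\langle P_n,x^2P_k\rangle=\langle xP_n,xP_k\rangle$ and then insert the three term recurrence \eqref{sp} into both slots. Orthogonality immediately kills every contribution except those with $k=n-1$ and $k=n-2$, confirming that only two terms survive in the expansion. For $k=n-1$, expanding $xP_n=P_{n+1}+a_nP_n+b_nP_{n-1}$ and $xP_{n-1}=P_n+a_{n-1}P_{n-1}+b_{n-1}P_{n-2}$ and collecting the surviving pairings gives $\langle xP_n,xP_{n-1}\rangle=a_nH_n+b_nH_{n-1}a_{n-1}^\top$; for $k=n-2$ the only surviving pairing yields $\langle xP_n,xP_{n-2}\rangle=b_nH_{n-1}$.

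Finally I would convert these into the stated coefficients using $b_n=H_nH_{n-1}^{-1}$. The $k=n-2$ case is immediate: $\beta_{n,n-2}=-b_nH_{n-1}H_{n-2}^{-1}=-b_nb_{n-1}$. The $k=n-1$ case is where the main obstacle lies: dividing by $H_{n-1}$ gives $\beta_{n,n-1}=-\bigl(a_nb_n+b_nH_{n-1}a_{n-1}^\top H_{n-1}^{-1}\bigr)$, and to match the claimed $-(a_nb_n+b_na_{n-1})$ I must show $H_{n-1}a_{n-1}^\top H_{n-1}^{-1}=a_{n-1}$. This is the genuinely non-commutative point: it follows from the symmetry of the measure, which makes $a_nH_n=\langle xP_n,P_n\rangle$ a symmetric matrix, hence $a_nH_n=H_na_n^\top$ for every $n$. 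Tracking the left/right placement of the matrix coefficients carefully throughout — in particular remembering that $\langle f,cg\rangle=\langle f,g\rangle c^\top$ — is the part that requires the most care; everything else is a direct transcription of the $t_1$ argument.
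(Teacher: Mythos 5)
Your proof is correct and follows the same overall skeleton as the paper's (Fourier expansion of $\partial_{t_2}P_n$, differentiation of the orthogonality relation, reduction to computing $\langle P_n,x^2P_k\rangle$ for $k=n-1,n-2$). The one place where you genuinely diverge is the evaluation of $\langle P_n,x^2P_{n-1}\rangle$: you split $x^2$ symmetrically as $\langle xP_n,xP_{n-1}\rangle$ and expand the recurrence in both slots, which puts $a_{n-1}$ on the right of the inner product and forces you to prove the identity $H_{n-1}a_{n-1}^\top H_{n-1}^{-1}=a_{n-1}$ from the symmetry of $\langle xP_{n-1},P_{n-1}\rangle$ and of $H_{n-1}$. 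The paper instead keeps both powers of $x$ in the left slot, writing $\langle x^2P_n,P_{n-1}\rangle=a_n\langle xP_n,P_{n-1}\rangle+b_n\langle xP_{n-1},P_{n-1}\rangle$ (the $xP_{n+1}$ term dying by degree), so that all matrix coefficients stay on the left and the answer $-(a_nb_n+b_na_{n-1})$ drops out directly from $\langle xP_n,P_{n-1}\rangle=H_n$ and the definition of $a_{n-1}$, with no transpose identity needed. Both routes are valid; your extra lemma is true (and is a nice consistency check on the left/right bookkeeping), but the paper's one-sided expansion is the more economical path in the non-commutative setting.
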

\begin{proof}
This proof is similar with the one in Proposition \ref{timeevo}. By assuming that
\begin{align*}
\partial_{t_2}P_n(x;t)=\sum_{k=0}^{n-1}\beta_{n,k}P_k(x;t),\quad \beta_{n,k}=\langle \partial_{t_2}P_n(x;t),P_k(x;t)\rangle H_k^{-1},
\end{align*}
and taking $t_2$-derivative to the orthogonal relation, one gets
\begin{align*}
\langle \partial_{t_2}P_n(x;t),P_k(x;t)\rangle=-\langle P_n(x;t),x^2P_k(x;t)\rangle
\end{align*}
and only $k=n-2$ and $k=n-1$ contribute. By direct computation, one knows that
\begin{align*}
\beta_{n,n-2}=-H_nH_{n-2}^{-1}=-b_nb_{n-1},
\end{align*}
and
\begin{align*}
\beta_{n,n-1}&=-\langle P_n(x;t),x^2P_{n-1}(x;t)\rangle H_{n-1}^{-1}=-\langle x^2P_{n}(x;t),P_{n-1}(x;t)\rangle H_{n-1}^{-1}\\
&=-\langle x(a_nP_n(x;t)+b_nP_{n-1}(x;t)), P_{n-1}(x;t)\rangle H_{n-1}^{-1}\\
&=-a_n\langle xP_n(x;t),P_{n-1}(x;t)\rangle H_{n-1}^{-1}-b_n\langle xP_{n-1}(x;t),P_{n-1}(x;t)\rangle H_{n-1}^{-1}
\end{align*}
and the result is obtained by using \eqref{qd1}.
\end{proof}
We rewrite this formula into matrix form and get
\begin{align*}
\partial_{t_2}\Phi=-(\mathcal{L}^2)_-\Phi,
\end{align*}
then the compatibility condition is $\partial_{t_2}\mathcal{L}=[\mathcal{L},(\mathcal{L}^2)_-]$. Moreover, one could obtain its nonlinear form
\begin{align}\label{t2flow}
\begin{aligned}
\partial_{t_2}a_n&=a_{n+1}b_{n+1}+b_{n+1}a_n-a_nb_n-b_na_{n-1},\\
\partial_{t_2}b_n&=a_n^2b_n-b_na_{n-1}^2+b_{n+1}b_n-b_{n}b_{n-1},
\end{aligned}
\end{align}
which is used to make a comparison with non-abelian Volterra lattice later.

A systematical procedure leads to the $t_n$-flow of the wave function and we state the following proposition.
\begin{proposition}\label{hof}
The higher flows of wave function evolve as
\begin{align*}
\partial_{t_n}\Phi=-(\mathcal{L}^n)_-\Phi.
\end{align*}
\end{proposition}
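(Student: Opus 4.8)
The plan is to mirror the proofs of Proposition \ref{timeevo} and the preceding $t_2$-flow proposition, treating the exponent $n$ uniformly rather than flow by flow. Since each $P_m(x;t)$ is monic with $t$-independent leading coefficient $\mathbb{I}_p$, the derivative $\partial_{t_n}P_m(x;t)$ is a matrix polynomial of degree at most $m-1$, so I would expand it in the orthogonal basis as $\partial_{t_n}P_m=\sum_{k=0}^{m-1}\beta_{m,k}P_k$ with $\beta_{m,k}=\langle \partial_{t_n}P_m,P_k\rangle H_k^{-1}$. The whole problem then reduces to identifying these Fourier coefficients with the strictly block lower triangular entries of $\mathcal{L}^n$.

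Next I would differentiate the orthogonality relation $\langle P_m,P_k\rangle=0$ for $k<m$. The induced flow in the measure is $\partial_{t_n}d\mu(x;t)=x^n d\mu(x;t)$, coming from $d\mu(x;t)=\exp(\sum_k t_k x^k)d\mu(x)$, so the product rule yields three terms, $\langle \partial_{t_n}P_m,P_k\rangle+\langle P_m,x^nP_k\rangle+\langle P_m,\partial_{t_n}P_k\rangle=0$. The last term vanishes because $\partial_{t_n}P_k$ has degree at most $k-1<m$ and is therefore orthogonal to $P_m$, while the symmetry of the measure gives $\langle P_m,x^nP_k\rangle=\langle x^nP_m,P_k\rangle$. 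Hence $\langle \partial_{t_n}P_m,P_k\rangle=-\langle x^nP_m,P_k\rangle$, exactly as in the $t_1$- and $t_2$-cases.

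Finally, I would feed in the spectral problem $x\Phi=\mathcal{L}\Phi$, which iterates to $x^n\Phi=\mathcal{L}^n\Phi$, i.e. $x^nP_m=\sum_j(\mathcal{L}^n)_{m,j}P_j$ in terms of the block entries of $\mathcal{L}^n$. Bilinearity together with $\langle P_j,P_k\rangle=H_k\delta_{j,k}$ collapses the sum to $\langle x^nP_m,P_k\rangle=(\mathcal{L}^n)_{m,k}H_k$, so that $\beta_{m,k}=-(\mathcal{L}^n)_{m,k}H_kH_k^{-1}=-(\mathcal{L}^n)_{m,k}$ for every $k<m$. Reading this row by row gives $\partial_{t_n}P_m=-\sum_{k<m}(\mathcal{L}^n)_{m,k}P_k$, which is precisely the $m$-th block component of $-(\mathcal{L}^n)_-\Phi$, establishing the claim.

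The computation is routine once the set-up is correct; the points demanding care are the non-commutative bookkeeping in the pairing — in particular using $\langle Af,g\rangle=A\langle f,g\rangle$ and $\langle f,Bg\rangle=\langle f,g\rangle B^\top$ together with $d\mu=d\mu^\top$ to move $x^n$ across the inner product — and confirming that $\partial_{t_n}P_m$ genuinely drops a degree, so that only the finitely many coefficients with $m-n\le k\le m-1$ survive. An equivalent route would be an induction on $n$ built on the already-established $t_1$- and $t_2$-flows, but the direct argument above handles all $n$ at once and immediately yields the hierarchy $\partial_{t_n}\mathcal{L}=[\mathcal{L},(\mathcal{L}^n)_-]$ by compatibility.
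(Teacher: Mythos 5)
Your argument is correct and is essentially the paper's own proof written out block-component-wise: the paper differentiates the full orthogonality relation $\left(\Phi,\Phi\right)=\mathcal{H}$ in matrix form, uses that $\left(\partial_{t_n}\Phi,\Phi\right)$ is strictly block lower triangular and $\left(\Phi,\partial_{t_n}\Phi\right)$ strictly block upper triangular to isolate $-(\mathcal{L}^n)_-\mathcal{H}$, and then recovers $\partial_{t_n}\Phi$ from its pairings --- exactly the content of your Fourier expansion $\partial_{t_n}P_m=\sum_{k<m}\beta_{m,k}P_k$ and the identification $\beta_{m,k}=-(\mathcal{L}^n)_{m,k}$. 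The non-commutative bookkeeping you flag ($\langle Af,g\rangle=A\langle f,g\rangle$, $\mathcal{H}$ block diagonal so $(\mathcal{L}^n\mathcal{H})_-=(\mathcal{L}^n)_-\mathcal{H}$) is handled correctly, so no gap.
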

\begin{proof}
Let's denote 
\begin{align*}
\left( f,g \right)=\left(\begin{array}{ccc}
\langle f_0,g_0\rangle&\langle f_0,g_1\rangle&\cdots\\
\langle f_1,g_0\rangle&\langle f_1,g_1\rangle&\cdots\\
\vdots&\vdots&\ddots
\end{array}
\right),
\end{align*}
then the orthogonal relation \eqref{or1} is equal to $\left(
\Phi,\Phi
\right)=\mathcal{H}$,
in which $\mathcal{H}$ is a block diagonal matrix with elements $(H_0,H_1,\cdots)$. In this notation, we know that $\langle x^n\Phi,\Phi\rangle=\left(\mathcal{L}^n\Phi,\Phi\right)=\mathcal{L}^n\mathcal{H}$ holds for arbitrary $n\in\mathbb{N}$. Taking the $t_n$-derivative to the orthogonal relation, one gets
\begin{align*}
\left(\partial_{t_n}\Phi,\Phi\right)+\left(\Phi,\partial_{t_n}\Phi\right)+\left(
x^n\Phi,\Phi
\right)=\partial_{t_n}\mathcal{H},
\end{align*}
where the first/second part is a strictly block lower/upper triangular matrix respectively and thus we get
\begin{align*}
\left(
\partial_{t_n}\Phi,\Phi
\right)=-(\mathcal{L}^n)_-\left(\Phi,\Phi\right).
\end{align*}
According to the Riesz representation theorem, we get the result.
\end{proof}

The spectral problem \eqref{sp} together with time flows of wave function give rise to the compatibility condition
\begin{align*}
\partial_{t_n}\mathcal{L}=[\mathcal{L},(\mathcal{L}^n)_-].
\end{align*}
To sum up, we state the following proposition.
\begin{proposition}
Introducing the Lax matrix 
\begin{align*}
\mathcal{L}=\left(\begin{array}{ccccc}
a_0&\mathbb{I}_p&&&\\
b_1&a_1&\mathbb{I}_p&&\\
&b_2&a_2&\mathbb{I}_p&\\
&&\ddots&\ddots&\ddots\end{array}
\right),
\end{align*}
then the non-abelian Toda hierarchy could be given by
\begin{align*}
\partial_{t_n}\mathcal{L}=[\mathcal{L},(\mathcal{L}^n)_-]
\end{align*}
with solutions $a_k=(\partial_{t_1}H_k)H_k^{-1}$ and $b_k=H_kH_{k-1}^{-1}$. Moreover, $\{H_k\}_{k=0,1,\cdots}$ are Hankel quasi-determinants whose moments satisfy evolution relations $\partial_{t_n}m_i=m_{n+i}$.
\end{proposition}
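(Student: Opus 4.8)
The plan is to read this proposition as the synthesis of the spectral problem and the higher time flows already established, and to obtain the Lax equation as their compatibility condition. First I would recall that Proposition~\ref{prop1} provides the spectral problem $x\Phi=\mathcal{L}\Phi$ with $\mathcal{L}$ the displayed block tridiagonal Jacobi matrix, and that Proposition~\ref{hof} gives $\partial_{t_n}\Phi=-(\mathcal{L}^n)_-\Phi$. Differentiating $x\Phi=\mathcal{L}\Phi$ with respect to $t_n$ yields
\begin{align*}
x\,\partial_{t_n}\Phi=(\partial_{t_n}\mathcal{L})\Phi+\mathcal{L}\,\partial_{t_n}\Phi .
\end{align*}
Because $x$ is a scalar spectral parameter it commutes with the $x$-independent operator $(\mathcal{L}^n)_-$, so substituting the evolution of $\Phi$ and applying $x\Phi=\mathcal{L}\Phi$ once more rewrites the left-hand side as $-(\mathcal{L}^n)_-\mathcal{L}\Phi$ and the right-hand side as $(\partial_{t_n}\mathcal{L})\Phi-\mathcal{L}(\mathcal{L}^n)_-\Phi$. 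Collecting the terms, I obtain $(\partial_{t_n}\mathcal{L})\Phi=[\mathcal{L},(\mathcal{L}^n)_-]\Phi$, with the sign matching the stated flow.

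The next step is to strip off the wave function. Since the $P_n(x)$ are monic of exact degree $n$, the components of $\Phi$ are linearly independent polynomials, and both $\partial_{t_n}\mathcal{L}$ and $[\mathcal{L},(\mathcal{L}^n)_-]$ act with finitely many bands; equating coefficients of the $P_j$ row by row therefore promotes the wave-function identity to the operator identity $\partial_{t_n}\mathcal{L}=[\mathcal{L},(\mathcal{L}^n)_-]$. I would then verify that this flow respects the band structure of $\mathcal{L}$, which is the point that most deserves care. Using $[\mathcal{L},\mathcal{L}^n]=0$ one rewrites $[\mathcal{L},(\mathcal{L}^n)_-]=-[\mathcal{L},(\mathcal{L}^n)_{\geq 0}]=[(\mathcal{L}^n)_{\geq 0},\mathcal{L}]$. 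Counting diagonals, the first expression is supported on diagonals $-n-1,\dots,0$ while the second is supported on diagonals $-1,\dots,n+1$; the two being equal, the support collapses onto diagonals $-1$ and $0$. This confirms that only the subdiagonal $b_n$ and the diagonal $a_n$ evolve while the superdiagonal stays frozen at $\mathbb{I}_p$, so that the abstract Lax equation is genuinely equivalent to a closed nonlinear system for $a_n,b_n$, such as the $t_2$-flow \eqref{t2flow}.

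Finally I would pin down the closed-form solutions. The identification $b_k=H_kH_{k-1}^{-1}$ is immediate from Proposition~\ref{prop1}. For the diagonal entries I would compare the numerator of $a_n$ in \eqref{qd1} with the Hankel derivative formula \eqref{derivative}: the two quasi-determinant expressions coincide once the first time flow is the one with $\partial_{t_1}m_i=m_{i+1}$, giving $a_k=(\partial_{t_1}H_k)H_k^{-1}$. The moment evolution itself follows by differentiating $m_i=\int_{\mathbb{R}}x^i\,d\mu(x;t)$ under $d\mu(x;t)=\exp\!\big(\sum_k t_kx^k\big)d\mu(x)$, which brings down a factor $x^n$ and yields $\partial_{t_n}m_i=\int_{\mathbb{R}}x^{i+n}\,d\mu(x;t)=m_{i+n}$.

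The hard part will not be any single computation but the band-structure consistency argument together with tracking non-commutativity: one must keep the left/right placement of the factors $H_n$ and of the quasi-determinant blocks fixed throughout, and one must be sure that $[\mathcal{L},(\mathcal{L}^n)_-]$ lands in the two-diagonal space of admissible variations of $\mathcal{L}$, so that the Lax hierarchy reduces to genuine evolution equations for $a_n$ and $b_n$. The passage from the wave-function identity to the operator identity rests on the mild nondegeneracy of $\Phi$ noted above, and can alternatively be carried out at the level of the Gram matrix $(\Phi,\Phi)=\mathcal{H}$ used in the proof of Proposition~\ref{hof}.
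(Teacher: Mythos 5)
Your argument is correct and follows essentially the same route as the paper, which states this proposition as a summary of the compatibility condition between the spectral problem \eqref{sp} and the flows of Proposition \ref{hof}, with the identifications $a_k=(\partial_{t_1}H_k)H_k^{-1}$ and $b_k=H_kH_{k-1}^{-1}$ read off from \eqref{qd1} and the Hankel derivative formula \eqref{derivative}. Your explicit check that $[\mathcal{L},(\mathcal{L}^n)_-]=[(\mathcal{L}^n)_{\geq 0},\mathcal{L}]$ is supported only on the diagonal and subdiagonal is a point the paper leaves implicit, and it is worth spelling out as you do.
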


\section{Adjacent families of matrix orthogonal polynomials and a discrete non-commutative Toda lattice}\label{sec3}
\subsection{Adjacent families of matrix orthogonal polynomials}
Adjacent families of orthogonal polynomials is a significant part in the theory of orthogonal polynomials, to formulate discrete spectral transformations of matrix orthogonal polynomials. For example, in commutative case, adjacent families of orthogonal polynomials were considered in 
Pad\'e approximation theory \cite{brezinski80} and their applications into discrete integrable systems were considered in \cite{spicer11} for standard orthogonal polynomials and \cite{li20} for skew orthogonal polynomials. Regarding with the matrix orthogonal polynomials with symmetric weight function, one can define an $\ell$-th adjacent family of $\{P_n(x)\}_{n\in\mathbb{N}}$ by $\{P_n^{(\ell)}(x)\}_{n\in\mathbb{N}}$ and
\begin{align*}
P_n^{(\ell)}(x)=\left|\begin{array}{cccc}
m_{\ell}&m_{\ell+1}&\cdots&\mathbb{I}_p\\
\vdots&\vdots&&\vdots\\
m_{\ell+n-1}&m_{\ell+n}&\cdots&x^{n-1}\mathbb{I}_p\\
m_{\ell+n}&m_{\ell+n+1}&\cdots&\boxed{x^n\mathbb{I}_p}
\end{array}
\right|,\quad m_{\ell+k}=\int_{\mathbb{R}} x^kd\mu^{(\ell)}(x),\quad d\mu^{(\ell)}(x)=x^\ell d\mu(x).
\end{align*}
Due to the symmetry of $d\mu(x)$, $d\mu^{(\ell)}(x)$ is still a symmetric measure.
Obviously, adjacent families of matrix orthogonal polynomials $\{P^{(\ell)}_n(x)\}_{n\in\mathbb{N}}$ have orthogonality with $d\mu^{(\ell)}$, and they satisfy
\begin{align*}
\int_{\mathbb{R}}x^k d\mu^{(\ell)}(x) (P_n^{(\ell)}(x))^\top=\int_{\mathbb{R}} x^{k+\ell}d\mu(x)(P_n^{(\ell)}(x))^\top=\langle x^\ell P_n^{(\ell)}, x^k\rangle=0,\quad k=0,\cdots,n-1,
\end{align*}
and 
\begin{align*}
\langle x^\ell P_n^{(\ell)}, x^n\rangle=\left|\begin{array}{ccc}
m_\ell&\cdots&m_{\ell+n}\\
\vdots&&\vdots\\
m_{\ell+n}&\cdots&\boxed{m_{\ell+2n}}
\end{array}
\right|=\left|\begin{array}{cc}
\Lambda^{(\ell)}_{n-1}&(\theta_{n+\ell}^n)^\top\\
\theta_{n+\ell}^n&\boxed{m_{\ell+2n}}
\end{array}
\right|:=H_{n}^{(\ell)}.
\end{align*}
If we introduce an inner product $\langle\cdot,\cdot\rangle_\ell$ with respect to $d\mu^{(\ell)}(x)$, then the above orthogonal relation can be written as
\begin{align}\label{orl}
\langle P_n^{(\ell)}(x),P_k^{(\ell)}(x)\rangle_\ell=H_{n}^{(\ell)}\delta_{n,k}.
\end{align}
Those adjacent families of matrix orthogonal polynomials also admit three term recurrence relations  stated as follows.
\begin{proposition}
The adjacent family of matrix orthogonal polynomials $\{P_n^{(\ell)}(x)\}_{n\in\mathbb{N}}$ satisfy three term recurrence relation
\begin{align}\label{ttrr2}
xP_n^{(\ell)}(x)=P_{n+1}^{(\ell)}(x)+a_n^{(\ell)}P_n^{(\ell)}(x)+b_n^{(\ell)}P_{n-1}^{(\ell)}(x),
\end{align}
where
\begin{align*}
a_n^{(\ell)}=\left(\left|\begin{array}{cc}
\Lambda_{n-1}^{(\ell)}& (\theta_{n+\ell}^n)^\top\\
\theta_{n+\ell+1}^n&\boxed{m_{2n+\ell+1}}
\end{array}\right|+\left|\begin{array}{cc}
\Lambda^{(\ell)}_{n-1}&e_{n-1}^\top\\
\theta_{n+\ell}^n&\boxed{0}\end{array}
\right|H_n^{(\ell)}
\right)(H_n^{(\ell)})^{-1},\quad b_n^{(\ell)}=H_n^{(\ell)}(H_{n-1}^{(\ell)})^{-1}.
\end{align*}
\end{proposition}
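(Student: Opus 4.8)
The plan is to rerun the argument of Proposition~\ref{prop1} with the inner product $\langle\cdot,\cdot\rangle_\ell$ attached to the shifted measure $d\mu^{(\ell)}(x)=x^\ell d\mu(x)$ in place of $\langle\cdot,\cdot\rangle$, the whole point being that every step is insensitive to the value of $\ell$ once the orthogonality~\eqref{orl} is in hand. Since $\{P_n^{(\ell)}(x)\}_{n\in\mathbb{N}}$ consists of monic polynomials of strictly increasing degree, it is a basis of the matrix polynomial space, and I would start from the Fourier expansion
\begin{align*}
xP_n^{(\ell)}(x)=P_{n+1}^{(\ell)}(x)+\sum_{k=0}^n\alpha_{n,k}^{(\ell)}P_k^{(\ell)}(x),\quad \alpha_{n,k}^{(\ell)}=\langle xP_n^{(\ell)},P_k^{(\ell)}\rangle_\ell\,(H_k^{(\ell)})^{-1},
\end{align*}
the leading coefficient being $\mathbb{I}_p$ because $xP_n^{(\ell)}$ and $P_{n+1}^{(\ell)}$ are both monic of degree $n+1$.

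Because $x$ is a scalar it passes freely between the two arguments, so $\langle xP_n^{(\ell)},P_k^{(\ell)}\rangle_\ell=\langle P_n^{(\ell)},xP_k^{(\ell)}\rangle_\ell$; for $k<n-1$ the polynomial $xP_k^{(\ell)}$ has degree below $n$ and the bracket vanishes by~\eqref{orl}, so only $k=n$ and $k=n-1$ survive. (The symmetry of $d\mu^{(\ell)}$, recorded just after its definition, is what guarantees that the adjacent system is a single orthogonal family rather than a bi-orthogonal pair, so that the recurrence closes among the $P_n^{(\ell)}$ alone.) For $b_n^{(\ell)}:=\alpha_{n,n-1}^{(\ell)}$ I would write $xP_{n-1}^{(\ell)}=P_n^{(\ell)}+(\text{degree}\le n-1)$ by monicity, and since $P_n^{(\ell)}$ is orthogonal to every polynomial of degree below $n$ this gives $\langle xP_n^{(\ell)},P_{n-1}^{(\ell)}\rangle_\ell=\langle P_n^{(\ell)},P_n^{(\ell)}\rangle_\ell=H_n^{(\ell)}$, hence $b_n^{(\ell)}=H_n^{(\ell)}(H_{n-1}^{(\ell)})^{-1}$ at once.

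For the diagonal coefficient $a_n^{(\ell)}:=\alpha_{n,n}^{(\ell)}$ I would expand $P_n^{(\ell)}(x)=\mathbb{I}_px^n+a_{n,n-1}^{(\ell)}x^{n-1}+\cdots$ and, discarding the lower-order terms killed by orthogonality in the second slot, obtain
\begin{align*}
\langle xP_n^{(\ell)},P_n^{(\ell)}\rangle_\ell=\langle x^{n+1},P_n^{(\ell)}\rangle_\ell+a_{n,n-1}^{(\ell)}\langle x^n,P_n^{(\ell)}\rangle_\ell,
\end{align*}
with $\langle x^n,P_n^{(\ell)}\rangle_\ell=H_n^{(\ell)}$. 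The remaining task is pure bookkeeping: since $\int x^k\,d\mu^{(\ell)}=m_{\ell+k}$, every moment subscript acquires the shift $\ell$, turning the Hankel block into $\Lambda_{n-1}^{(\ell)}$ and the moment rows into $\theta_{n+\ell}^n,\theta_{n+\ell+1}^n$. This identifies $\langle x^{n+1},P_n^{(\ell)}\rangle_\ell$ with the first quasi-determinant in the claimed $a_n^{(\ell)}$, while $a_{n,n-1}^{(\ell)}$ is read off from the solution of the non-commutative linear system in Proposition~\ref{nls} exactly as the subleading coefficient was in Proposition~\ref{prop1}, producing the second quasi-determinant; multiplying the sum on the right by $(H_n^{(\ell)})^{-1}$ then yields the stated formula.

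I expect no genuine obstacle, as the argument is structurally identical to Proposition~\ref{prop1}. The only things demanding care are the uniform shift of every moment index by $\ell$ throughout the quasi-determinants, and respecting non-commutativity by keeping all products---$a_{n,n-1}^{(\ell)}H_n^{(\ell)}$, $H_n^{(\ell)}(H_{n-1}^{(\ell)})^{-1}$, and the final right multiplication by $(H_n^{(\ell)})^{-1}$---in the left/right order forced by the matrix inner product.
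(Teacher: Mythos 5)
Your proposal is correct and coincides with what the paper intends: the paper states this proposition without proof precisely because it is the argument of Proposition~\ref{prop1} rerun verbatim with $\langle\cdot,\cdot\rangle_\ell$ and the uniformly $\ell$-shifted moments, which is exactly what you carry out (Fourier expansion, vanishing of $\alpha_{n,k}^{(\ell)}$ for $k<n-1$ via the scalar $x$ moving across the bracket, and the identification of $\langle xP_n^{(\ell)},P_n^{(\ell)}\rangle_\ell$ with the two quasi-determinants). Your extra care about the symmetry of $d\mu^{(\ell)}$ and the left/right order of the non-commutative products matches the paper's conventions.
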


Moreover, it is not difficult to follow Proposition \ref{timeevo} to show 
\begin{align*}
\partial_t P_n^{(\ell)}(x)=-b_n^{(\ell)}P_{ n-1}^{(\ell)}(x),
\end{align*}
and a non-abelian Toda equation \eqref{nctoda} could be derived. We can state the following corollary as a direct result.

\begin{corollary}
Each of $\{H_n^{(\ell)}\}_{\ell=0,1,\cdots}$ satisfies the non-abelian Toda equation \eqref{nctoda}
\begin{align*}
\partial_t (\partial_t H^{(\ell)}_n\cdot (H^{(\ell)}_n)^{-1})=H_{n+1}^{(\ell)}(H_n^{(\ell)})^{-1}-H_n^{(\ell)}(H_{n-1}^{(\ell)})^{-1}
\end{align*}
if moments in $H_{n}^{(\ell)}$ satisfy time evolutions $\partial_t m_i=m_{i+1}$.
\end{corollary}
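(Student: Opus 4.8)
The plan is to observe that the $\ell$-th adjacent family $\{P_n^{(\ell)}(x)\}$ is nothing but a family of matrix orthogonal polynomials associated to the shifted measure $d\mu^{(\ell)}(x)=x^\ell d\mu(x)$, so that the entire derivation of Section \ref{sec2} applies verbatim once one checks that $d\mu^{(\ell)}$ inherits the two structural hypotheses used there: symmetry and the time flow $\partial_t d\mu = x\,d\mu$. Symmetry of $d\mu^{(\ell)}$ has already been noted above. For the time flow, I would simply differentiate $d\mu^{(\ell)}=x^\ell d\mu$ to get $\partial_t d\mu^{(\ell)} = x^\ell\partial_t d\mu = x^\ell(x\,d\mu)=x\,d\mu^{(\ell)}$; equivalently, the moments $m_{\ell+i}=\int_{\mathbb{R}}x^{\ell+i}\,d\mu$ obey $\partial_t m_{\ell+i}=m_{\ell+i+1}$, which is exactly the hypothesis of the statement.

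With these two facts in hand, the three term recurrence \eqref{ttrr2} for the adjacent family is already established, and the derivative formula \eqref{derivative}, which was derived for an arbitrary Hankel quasi-determinant, applies directly to $H_n^{(\ell)}$, giving $a_n^{(\ell)}=(\partial_t H_n^{(\ell)})(H_n^{(\ell)})^{-1}$, exactly paralleling the identity $a_n=(\partial_t H_n)H_n^{-1}$ obtained from \eqref{qd1}. Likewise $b_n^{(\ell)}=H_n^{(\ell)}(H_{n-1}^{(\ell)})^{-1}$ holds by definition. The time evolution $\partial_t P_n^{(\ell)}(x)=-b_n^{(\ell)}P_{n-1}^{(\ell)}(x)$ is then obtained by repeating Proposition \ref{timeevo} word for word with $P_n\to P_n^{(\ell)}$, $H_n\to H_n^{(\ell)}$ and the inner product replaced by $\langle\cdot,\cdot\rangle_\ell$.

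Finally, I would impose the compatibility condition between the spectral problem \eqref{ttrr2} and the time evolution above. Matching the coefficients of $P_n^{(\ell)}$ and of $P_{n-1}^{(\ell)}$ yields the nonlinear system $\partial_t a_n^{(\ell)}=b_{n+1}^{(\ell)}-b_n^{(\ell)}$ and $\partial_t b_n^{(\ell)}=a_n^{(\ell)}b_n^{(\ell)}-b_n^{(\ell)}a_{n-1}^{(\ell)}$, precisely the non-abelian Toda equations in nonlinear form. Substituting $a_n^{(\ell)}=(\partial_t H_n^{(\ell)})(H_n^{(\ell)})^{-1}$ and $b_n^{(\ell)}=H_n^{(\ell)}(H_{n-1}^{(\ell)})^{-1}$ into the first of these, exactly as in the passage following \eqref{nctoda}, converts it into the bilinear form $\partial_t(\partial_t H_n^{(\ell)}\cdot(H_n^{(\ell)})^{-1})=H_{n+1}^{(\ell)}(H_n^{(\ell)})^{-1}-H_n^{(\ell)}(H_{n-1}^{(\ell)})^{-1}$, which is the claim.

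Since every ingredient is a direct transcription of the $\ell=0$ case, there is essentially no new obstacle; the only point requiring care is confirming that the derivative formula \eqref{derivative} still identifies the first summand of $a_n^{(\ell)}$ with $\partial_t H_n^{(\ell)}$, that is, that shifting all moment indices by $\ell$ does not disturb the Hankel structure on which \eqref{derivative} relies. As $H_n^{(\ell)}$ is a genuine Hankel quasi-determinant in the moments $\{m_{\ell+i}\}$, this is immediate, and the corollary follows.
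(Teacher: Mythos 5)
Your proposal is correct and follows essentially the same route as the paper, which likewise treats the adjacent family as ordinary matrix orthogonal polynomials for the (still symmetric) shifted measure $d\mu^{(\ell)}=x^\ell d\mu$, re-runs Proposition \ref{timeevo} to get $\partial_t P_n^{(\ell)}=-b_n^{(\ell)}P_{n-1}^{(\ell)}$, and then reads off the bilinear form from the compatibility condition via $a_n^{(\ell)}=(\partial_t H_n^{(\ell)})(H_n^{(\ell)})^{-1}$ and $b_n^{(\ell)}=H_n^{(\ell)}(H_{n-1}^{(\ell)})^{-1}$. Your explicit check that the shifted moments obey $\partial_t m_{\ell+i}=m_{\ell+i+1}$ and that the Hankel structure (with symmetric entries) is preserved is exactly the point the paper leaves implicit.
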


\subsection{Spectral transformations and discrete non-commutative Toda lattice}
After the introduction of adjacent families of matrix orthogonal polynomials, one can consider the following Christoffel transformation.
\begin{theorem}\label{ctt}
The Christoffel transformation of adjacent families of matrix orthogonal polynomials is given by
\begin{align}\label{ct}
P_n^{(\ell)}(x)=xP_{n-1}^{(\ell+2)}(x)-\mathcal{A}_n^\ell P_{n-1}^{(\ell+1)}(x),
\end{align}
where $\mathcal{A}_n^\ell=H_{n-1}^{(\ell+1)}\left((H_{n-1}^{(\ell)})^{-1}-(H_{n-2}^{(\ell+2)})^{-1}\right)$.
\end{theorem}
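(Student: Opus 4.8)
The plan is to prove \eqref{ct} by combining the uniqueness of monic matrix orthogonal polynomials with the non-commutative Jacobi identity \eqref{ncjacobi}, the latter being what produces the explicit quasi-determinant coefficient $\mathcal{A}_n^\ell$. Both sides of \eqref{ct} are monic of degree $n$ (indeed $xP_{n-1}^{(\ell+2)}(x)$ is monic of degree $n$ while $\mathcal{A}_n^\ell P_{n-1}^{(\ell+1)}(x)$ has degree $n-1$), so it suffices to characterize the difference $D(x):=P_n^{(\ell)}(x)-xP_{n-1}^{(\ell+2)}(x)$, which has $\deg D\le n-1$, and then to read off the constant matrix that multiplies the surviving family.

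First I would record the elementary shift identity $\langle f,x^k\rangle_\ell=\int f(x)x^{\ell}d\mu(x)\,x^k=\langle f,x^{k-1}\rangle_{\ell+1}$ for $k\ge 1$, which trades a power of $x$ for a unit shift of $\ell$. Applying it to each piece of $D$ gives $\langle D,x^j\rangle_{\ell+1}=\langle P_n^{(\ell)},x^{j+1}\rangle_\ell-\langle P_{n-1}^{(\ell+2)},x^{j}\rangle_{\ell+2}$, and both terms vanish for $0\le j\le n-2$ by the orthogonality \eqref{orl} of each family with respect to its own measure. Since $P_{n-1}^{(\ell+1)}$ is the unique monic polynomial orthogonal to $\mathbb{I}_p,\dots,x^{n-2}\mathbb{I}_p$ in $\langle\cdot,\cdot\rangle_{\ell+1}$, expanding $D=\sum_{k\le n-1}c_kP_k^{(\ell+1)}$ and using $\langle D,P_k^{(\ell+1)}\rangle_{\ell+1}=c_kH_k^{(\ell+1)}$ forces $c_k=0$ for $k\le n-2$; hence $D(x)=-\mathcal{A}_n^\ell P_{n-1}^{(\ell+1)}(x)$ for a constant $\mathcal{A}_n^\ell\in\mathbb{R}^{p\times p}$, which already establishes the shape of \eqref{ct}. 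Pairing with $x^{n-1}$ in $\langle\cdot,\cdot\rangle_{\ell+1}$ and using the shift identity once more evaluates $\langle D,x^{n-1}\rangle_{\ell+1}=H_n^{(\ell)}-H_{n-1}^{(\ell+2)}$, so $\mathcal{A}_n^\ell=(H_{n-1}^{(\ell+2)}-H_n^{(\ell)})(H_{n-1}^{(\ell+1)})^{-1}$.

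To obtain the stated quasi-determinant form of $\mathcal{A}_n^\ell$ directly — the route suggested by the introduction's phrase ``direct result of quasi-determinant identity'' — I would instead start from the defining quasi-determinant of $P_n^{(\ell)}(x)$ and permute its first $n$ rows and first $n$ columns (leaving the boxed entry $x^n\mathbb{I}_p$ fixed, which is legitimate for quasi-determinants) so that the bulk block becomes $\Lambda_{n-2}^{(\ell+2)}$ and the two bordering rows/columns carry the moments $\theta_{\ell+1}^{n-1},\theta_{\ell+n+1}^{n-1}$ and the polynomial entries. Applying \eqref{ncjacobi} to this $3\times3$ block form, the leading term is the quasi-determinant with bulk $\Lambda_{n-2}^{(\ell+2)}$, bordering moment row $\theta_{\ell+n+1}^{n-1}$, polynomial column $(x\mathbb{I}_p,\dots,x^{n-1}\mathbb{I}_p)^\top$ and boxed entry $x^n\mathbb{I}_p$; pulling the scalar $x$ out of its last column identifies it exactly with $xP_{n-1}^{(\ell+2)}(x)$. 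By the previous paragraph the subtracted Jacobi product is then forced to be a constant matrix times a degree-$(n-1)$ polynomial proportional to $P_{n-1}^{(\ell+1)}(x)$, and reading off the constant matrix yields $\mathcal{A}_n^\ell$.

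The main obstacle is this last identification: showing that the coefficient emerging from the Jacobi product equals $H_{n-1}^{(\ell+1)}\big((H_{n-1}^{(\ell)})^{-1}-(H_{n-2}^{(\ell+2)})^{-1}\big)$, equivalently that this equals the compact form $(H_{n-1}^{(\ell+2)}-H_n^{(\ell)})(H_{n-1}^{(\ell+1)})^{-1}$ found above. This is a genuine Hankel-quasi-determinant identity relating $H_n^{(\ell)},H_{n-1}^{(\ell)},H_{n-1}^{(\ell+1)},H_{n-1}^{(\ell+2)}$ and $H_{n-2}^{(\ell+2)}$, and I would prove it by repeatedly applying the homological relations \eqref{homological} to convert the Hankel quasi-determinants produced by the Jacobi product — whose boxes sit away from the bottom-right corner — into the standard normalizations $H$, the degree drop to $H_{n-2}^{(\ell+2)}$ arising precisely from one such homological reduction. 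A convenient consistency check is the case $n=1$, where $P_0^{(\ell+1)}=P_0^{(\ell+2)}=\mathbb{I}_p$ and both expressions collapse to $m_{\ell+1}m_\ell^{-1}$ (under the convention $(H_{-1}^{(\ell+2)})^{-1}=0$), matching $P_1^{(\ell)}(x)=x\mathbb{I}_p-m_{\ell+1}m_\ell^{-1}$.
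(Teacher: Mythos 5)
Your argument is correct but follows a genuinely different route from the paper's. The paper proves \eqref{ct} in a single stroke: it applies the non-commutative Jacobi identity \eqref{ncjacobi} to the quasi-determinant defining $P_n^{(\ell)}(x)$, partitioned along its first row and first column, identifies the leading term with $xP_{n-1}^{(\ell+2)}(x)$, and then reduces the three quasi-determinants in the subtracted Jacobi product to $H_{n-1}^{(\ell+1)}$, $(H_{n-1}^{(\ell)})^{-1}$-type factors and $P_{n-1}^{(\ell+1)}(x)$ via the homological relations \eqref{homological}; your third paragraph is essentially a reconstruction of exactly this. What you add, and what carries the real content of your proposal, is the orthogonality/uniqueness argument of your second paragraph, which the paper does not use here at all: your shift identity and the evaluations $\langle P_n^{(\ell)},x^{n-1}\rangle_{\ell+1}=H_n^{(\ell)}$ and $\langle xP_{n-1}^{(\ell+2)},x^{n-1}\rangle_{\ell+1}=H_{n-1}^{(\ell+2)}$ all check out, so you obtain \eqref{ct} with the closed form $\mathcal{A}_n^\ell=(H_{n-1}^{(\ell+2)}-H_n^{(\ell)})(H_{n-1}^{(\ell+1)})^{-1}$ with no quasi-determinant manipulation whatsoever. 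The one step you leave unexecuted is the identification of this compact form with the stated one, namely
\begin{align*}
H_{n-1}^{(\ell+2)}-H_{n}^{(\ell)}=H_{n-1}^{(\ell+1)}\left((H_{n-1}^{(\ell)})^{-1}-(H_{n-2}^{(\ell+2)})^{-1}\right)H_{n-1}^{(\ell+1)},
\end{align*}
which is precisely the fully discrete non-abelian Toda equation \eqref{ncdt} with $n$ shifted down by one. Be careful here: in the paper \eqref{ncdt} is \emph{derived from} Theorems \ref{ctt} and \ref{gtt}, so you cannot quote it at this point without circularity; you must establish it independently, which is possible (it is \cite[Thm 9.7.1]{gelfand05} and follows from \eqref{ncjacobi} applied to the Hankel matrix $(m_{\ell+i+j})_{i,j=0}^{n}$ together with \eqref{homological}, exactly the machinery you name). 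With that identity supplied your proof is complete; the trade-off is that your route yields the transparent coefficient $(H_{n-1}^{(\ell+2)}-H_n^{(\ell)})(H_{n-1}^{(\ell+1)})^{-1}$ and the vanishing of all lower-order terms for free, while the paper's route produces the stated quasi-determinant form directly but buries the comparable amount of work in the unspelled-out homological reductions.
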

\begin{proof}
As was shown in \cite{brezinski90}, the Christoffel transformation in scalar case could be directly verified by determinant identity. Therefore, we apply the non-commutative version of Jacobi identity \eqref{ncjacobi} to the adjacent family of matrix orthogonal polynomials and get
\begin{align*}
&\left|\begin{array}{cccc}
m_\ell&\cdots&m_{\ell+n-1}&\mathbb{I}_p\\
\vdots&&\vdots&\vdots\\
m_{\ell+n-1}&\cdots&m_{\ell+2n-2}&x^{n-1}\mathbb{I}_p\\
m_{\ell+n}&\cdots&m_{\ell+2n-1}&\boxed{x^n\mathbb{I}_p}
\end{array}
\right|=\left|\begin{array}{cccc}
m_{\ell+2}&\cdots&m_{\ell+n}&x\mathbb{I}_p\\
\vdots&&\vdots\\
m_{\ell+n}&\cdots&m_{\ell+2n-2}&x^{n-1}\mathbb{I}_p\\
m_{\ell+n+1}&\cdots&m_{\ell+2n-1}&\boxed{x^n\mathbb{I}_p}
\end{array}
\right|\\
&\quad-
\left|\begin{array}{ccc}
m_{\ell+1}&\cdots&{m_{\ell+n}}\\
\vdots&&\vdots\\
\boxed{m_{\ell+n}}&\cdots&m_{\ell+2n-1}
\end{array}
\right|
\left|\begin{array}{ccc}
\boxed{m_\ell}&\cdots&m_{\ell+n-1}\\
\vdots&&\vdots\\
m_{\ell+n-1}&\cdots&m_{\ell+2n-2}
\end{array}
\right|^{-1}
\left|\begin{array}{cccc}
m_{\ell+1}&\cdots&m_{\ell+n-1}&\boxed{\mathbb{I}_p}\\
\vdots&&\vdots&\vdots\\
m_{\ell+n}&\cdots&m_{\ell+2n-2}&x^{n-1}\mathbb{I}_p
\end{array}
\right|.
\end{align*}
By making use of homological relations \eqref{homological} to the last three quasi-determinants, one is led to the result.
\end{proof}
Besides, a Geronimus transformation for those adjacent families could be obtained by simply using the orthogonal relation. We state it as following.
\begin{theorem}\label{gtt}
The Geronimus transformation of matrix orthogonal polynomials is
\begin{align}\label{gt}
xP_n^{(\ell+1)}(x)=P_{n+1}^{(\ell)}(x)+\mathcal{B}_n^\ell P_n^{(\ell)}(x),
\end{align}
where $\mathcal{B}_n^\ell=H_{n}^{(\ell+1)}(H_{n}^{(\ell)})^{-1}$.
\end{theorem}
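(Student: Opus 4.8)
The plan is to establish the Geronimus transformation \eqref{gt} purely from orthogonality, in close analogy with the proof of Proposition \ref{prop1}. First I would observe that $xP_n^{(\ell+1)}(x)$ is a monic polynomial of degree $n+1$, and since the family $\{P_k^{(\ell)}(x)\}_{k\in\mathbb{N}}$ forms a basis of the space of matrix polynomials, I expand
\begin{align*}
xP_n^{(\ell+1)}(x)=P_{n+1}^{(\ell)}(x)+\sum_{k=0}^n\gamma_{n,k}P_k^{(\ell)}(x),
\end{align*}
where the leading coefficient is $\mathbb{I}_p$ because both sides are monic of degree $n+1$, and the Fourier coefficients are given by $\gamma_{n,k}=\langle xP_n^{(\ell+1)}(x),P_k^{(\ell)}(x)\rangle_\ell (H_k^{(\ell)})^{-1}$ using the orthogonality relation \eqref{orl}.

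The key step is to show that $\gamma_{n,k}=0$ for all $k<n$, leaving only the $k=n$ term. Here I would exploit the relation $d\mu^{(\ell+1)}(x)=x\,d\mu^{(\ell)}(x)$ connecting the two measures, so that the inner product against the shifted measure absorbs one power of $x$:
\begin{align*}
\langle xP_n^{(\ell+1)}(x),P_k^{(\ell)}(x)\rangle_\ell=\int_{\mathbb{R}}P_n^{(\ell+1)}(x)\,x\,d\mu^{(\ell)}(x)\,(P_k^{(\ell)}(x))^\top=\langle P_n^{(\ell+1)}(x),P_k^{(\ell)}(x)\rangle_{\ell+1}.
\end{align*}
Since $P_n^{(\ell+1)}(x)$ is orthogonal to all polynomials of degree less than $n$ with respect to $\langle\cdot,\cdot\rangle_{\ell+1}$, and $P_k^{(\ell)}(x)$ has degree $k<n$, this inner product vanishes for $k<n$. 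Thus only $k=n$ survives and $\mathcal{B}_n^\ell:=\gamma_{n,n}$ is the sole remaining coefficient.

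It remains to identify $\mathcal{B}_n^\ell$ explicitly. Evaluating $\gamma_{n,n}=\langle xP_n^{(\ell+1)}(x),P_n^{(\ell)}(x)\rangle_\ell (H_n^{(\ell)})^{-1}=\langle P_n^{(\ell+1)}(x),P_n^{(\ell)}(x)\rangle_{\ell+1}(H_n^{(\ell)})^{-1}$, I would use monicity of $P_n^{(\ell)}(x)$ to reduce the numerator: since $P_n^{(\ell)}(x)=x^n\mathbb{I}_p+(\text{lower order})$ and $P_n^{(\ell+1)}(x)$ is orthogonal to everything of degree below $n$ under $\langle\cdot,\cdot\rangle_{\ell+1}$, only the leading term $x^n\mathbb{I}_p$ contributes, giving $\langle P_n^{(\ell+1)}(x),P_n^{(\ell)}(x)\rangle_{\ell+1}=\langle P_n^{(\ell+1)}(x),x^n\rangle_{\ell+1}=H_n^{(\ell+1)}$. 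Hence $\mathcal{B}_n^\ell=H_n^{(\ell+1)}(H_n^{(\ell)})^{-1}$, as claimed. The main obstacle I anticipate is bookkeeping in the non-commutative setting: one must be careful that the matrix factors $H_k^{(\ell)}$ and the transpose appearing in the inner product stay on the correct side, since the ring is non-abelian, so I would verify the placement of each $(H_k^{(\ell)})^{-1}$ against the definition of the Fourier coefficient rather than treating the expansion symbolically.
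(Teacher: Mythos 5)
Your proposal is correct and follows essentially the same route as the paper: expand $xP_n^{(\ell+1)}(x)$ in the basis $\{P_k^{(\ell)}(x)\}$, use the measure shift $d\mu^{(\ell+1)}(x)=x\,d\mu^{(\ell)}(x)$ to convert $\langle xP_n^{(\ell+1)},\cdot\rangle_\ell$ into $\langle P_n^{(\ell+1)},\cdot\rangle_{\ell+1}$ so that all coefficients with $k<n$ vanish, and identify the surviving coefficient as $H_n^{(\ell+1)}(H_n^{(\ell)})^{-1}$. The only cosmetic difference is that the paper pairs against the monomials $x^k d\mu^{(\ell)}$ rather than against $P_k^{(\ell)}$, which changes nothing of substance.
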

\begin{proof}
The Geronimus transformation is to find a relation between $\{P_n^{(\ell)}(x)\}_{n\in\mathbb{N}}$ and $\{P_n^{(\ell+1)}(x)\}_{n\in\mathbb{N}}$. Let's assume 
\begin{align*}
xP_n^{(\ell+1)}(x)=P_{n+1}^{(\ell)}(x)+\sum_{k=0}^n\alpha_{n,k}^\ell P_k^{(\ell)}(x)
\end{align*}
and take the inner product with $x^kd\mu^{(\ell)}(x)$ on both sides, then one finds that
\begin{align*}
\langle xP_n^{(\ell+1)}(x),x^k\rangle_\ell=\langle P_n^{(\ell+1)}(x),x^k\rangle_{\ell+1}=0
\end{align*}
if $k<n$ according to \eqref{orl}. When $k=n$, it gives rise to
\begin{align*}
\alpha_{n,n}^\ell=\langle  P_n^{(\ell)}(x), xP_n^{(\ell+1)}(x)\rangle_\ell\langle P_n^{(\ell)}(x),P_n^{(\ell)}(x)\rangle_\ell^{-1}
=
H_{n}^{(\ell+1)}(H_{n}^{(\ell)})^{-1}.
\end{align*}
\end{proof}
\begin{remark}
This Geronimus transformation is a special case considered in \cite{garcia18} since we set the spectral parameter as zero here. If one considers a nonzero spectral parameter, a slightly different discrete non-abelian integrable system could be derived whose $\ell$ index acts as a discrete parameter and obey certain discrete dispersion relation. Since our start point in this part is to connect the non-abelian Toda lattice given by Gelfand et al with matrix orthogonal polynomials, we only need to consider the special case.
\end{remark}
The compatibility condition of Christoffel transformation \eqref{ct} and Geronimus transformation \eqref{gt} is
\begin{align*}
xP_n^{(\ell)}(x)=P_{n+1}^{(\ell)}(x)+(\mathcal{B}_n^\ell+\mathcal{B}_{n}^{\ell-1}-\mathcal{A}_{n+1}^{\ell-1})P_n^{(\ell)}(x)+\mathcal{B}^{\ell-1}_n(\mathcal{B}_{n-1}^{\ell}-\mathcal{A}_n^{\ell-1})P_{n-1}^{(\ell)}(x).
\end{align*}
Comparing it with the three term recurrence relation \eqref{ttrr2}, 
one finally gets
\begin{align}
H_{n+1}^{(\ell)}=H_n^{(\ell+2)}-H_n^{(\ell+1)}\left(
(H_n^{(\ell)})^{-1}-(H_{n-1}^{(\ell+2)})^{-1}
\right)H_n^{(\ell+1)}\label{ncdt}
\end{align}
which is a fully discrete non-commutative Toda lattice given by \cite[Thm 9.7.1]{gelfand05}.

The Lax pair of those discrete non-commutative Toda lattice equations can be formulated as follows. 
As discussed in last section, let's define $\Phi^\ell$ as a column vector whose elements are composed by $\{P_0^{(\ell)}(x),P_1^{(\ell)}(x),\cdots\}$, 
then three term recurrence relations \eqref{ttrr2} could be written as
\begin{align*}
x{\Phi}^\ell={\mathcal{L}}^\ell {\Phi}^\ell ,\quad {\mathcal{L}}^\ell=\left(\begin{array}{ccccc}
{a}_0^{(\ell)}&\mathbb{I}_p&&&\\
b_1^{(\ell)}&{a}_1^{(\ell)}&\mathbb{I}_p&&\\
&b_2^{(\ell)}&{a}_2^{(\ell)}&\mathbb{I}_p&\\
&&\ddots&\ddots&\ddots
\end{array}
\right).
\end{align*}
Moreover,
from the Christoffel and Geronimus transformations, one finds
\begin{align}\label{n1}
P_n^{(\ell)}(x)=P_n^{(\ell+1)}(x)+(\mathcal{B}_{n-1}^{\ell+1}-\mathcal{A}_n^\ell)P_{n-1}^{(\ell+1)}(x).
\end{align}
and thus
\begin{align*}
{\Phi}^\ell={\mathcal{M}}^{\ell+1}{\Phi}^{\ell+1},\quad {\mathcal{M}}^{\ell+1}=\left(
\begin{array}{cccc}
\mathbb{I}_p&&&\\
\mathcal{B}_0^{\ell+1}-\mathcal{A}_1^\ell&\mathbb{I}_p&&\\
&\mathcal{B}_1^{\ell+1}-\mathcal{A}_2^\ell&\mathbb{I}_p&\\
&&\ddots&\ddots\end{array}
\right).
\end{align*}
Therefore, the compatibility conditions could be given as
\begin{align*}
{\mathcal{M}}^{\ell+1}{\mathcal{L}}^{\ell+1}={\mathcal{L}}^{\ell}{\mathcal{M}}^{\ell+1}.
\end{align*}
Moreover, if one wants to find the nonlinear form satisfied by $\mathcal{A}_n^\ell$ and $\mathcal{B}_n^\ell$, then one could take the Geronimus transformation and equation \eqref{n1} as the discrete Lax pair. By rewriting Geronimus transformation as matrix form
\begin{align*}
x\Phi^{\ell+1}=\mathcal{J}^\ell \Phi^\ell,\quad \mathcal{J}^\ell=\left(\begin{array}{cccc}
\mathcal{B}_0^\ell&\mathbb{I}_p&&\\
&\mathcal{B}_1^\ell&\mathbb{I}_p&\\
&&\ddots&\ddots\end{array}
\right),
\end{align*}
and the compatibility condition is 
\begin{align*}
\mathcal{M}^{\ell+1}\mathcal{J}^\ell=\mathcal{J}^{\ell-1}M^\ell,
\end{align*}
from which one could get the nonlinear equation
\begin{align*}
\mathcal{B}_{n-1}^{\ell+1}-\mathcal{A}_n^\ell=\mathcal{B}_n^{\ell-1}-\mathcal{A}_{n+1}^{\ell-1},\quad
(\mathcal{B}_n^{\ell+1}-\mathcal{A}_{n+1}^\ell)\mathcal{B}_n^\ell=\mathcal{B}_{n+1}^{\ell-1}(\mathcal{B}_n^\ell-\mathcal{A}_{n+1}^{\ell-1}).
\end{align*}

At the end of this section, we remark that although each of $\{H_n^{(\ell)}\}_{\ell=0,1,\cdots}$ is a solution of non-commutative Toda equation \eqref{nctoda}, there are three different families of solutions connected by a single equation in equations \eqref{ncdt}, which is not a B\"acklund transformation. 
In following, we demonstrate a B\"acklund transformation of non-commutative Toda equation by using a reduction to the matrix orthogonal polynomials.

\section{A reductional case and B\"acklund transformation of non-abelian Toda lattice}\label{sec4}

In this part, we consider a B\"acklund transformation of non-abelian Toda lattice.  It was motivated by  a modification of moments in the scalar case, which was applied to connect Toda lattice with Volterra lattice \cite{peherstorfer07,chu08} and later to connect Camassa-Holm equation with a 2-component modified Camassa-Holm equation \cite{chang18}. We summarize the idea in the following diagram.\\

\begin{center}
\begin{tikzpicture}[font={ \small}]
\node (tran1) at (0,0) [draw, rectangle,minimum width=3cm, minimum height=1cm] {\begin{tabular}{c}standard orthogonal polynomials\end{tabular}};      
\node (gene1) at (0,-2) [draw, rectangle, align=left,minimum width=3cm,minimum height=1cm] {\begin{tabular}{c}symmetric orthogonal polynomials\end{tabular}};       
 \node(tran2) at (5,0) [draw, rectangle,minimum width=3cm, minimum height=1cm] {\begin{tabular}{c}Toda lattice\end{tabular}};      
 \node(gene2)  at (5,-2) [draw, rectangle, align=left,minimum width=3cm,minimum height=1cm] {\begin{tabular}{c}Volterra lattice\end{tabular}};
\draw[->] (tran1) -- (gene1);
\draw[->] (tran1) -- (tran2);
\draw[->] (gene1)--(gene2);
\draw[->](tran2)--(gene2);
\end{tikzpicture}
\end{center}

Such moment modification could be realized by setting odd moments to be zero and even moments to be non-zero elements, which is also referred to as unwrapping of measure. In the scalar case, 
one could consider a Cauchy transform of the measure, i.e.
\begin{align*}
F(z)=\int_{\mathbb{R}}\frac{d\mu(x)}{z-x}=\sum_{i=0}^\infty\frac{s_i}{z^{i+1}},\quad s_i=\int_{\mathbb{R}}x^i d\mu(x).
\end{align*}
Then a symmetric constraint on the measure leads to 
\begin{align*}
{s_{2i+1}}\ne 0,\quad s_{2i}=0
\end{align*}
which satisfy above-mentioned requirements and Hermite polynomial is an example. In the matrix orthogonal polynomial framework, one could take the matrix-valued Hermite polynomial for instance. In \cite{cafasso13}, Hermite-type matrix orthogonal polynomials were expressed by contour integrals and their connections with non-commutative Painlev\'e equation were studied. It was found that the Hermite-type matrix orthogonal polynomials satisfy a second-order differential equation, which is to admit a double integral representation of the matrix-valued Christoffel-Darboux kernels. Some other examples could be found at \cite{bertola21,ismail19}.

In our case, we consider a ``formal'' matrix-valued weight \footnote{The formal matrix-valued weight means that the weight function are not necessarily written in an explicit form since we don't need explicit expressions of those moments/polynomials but evolutions of the moments. For details, one could refer to \cite{brezinski80}.}
\begin{align}\label{dmu}
d\mu(x)=M(x)\exp(-x^2)dx,
\end{align}
where $M(x)\in\mathbb{R}^{p\times p}[x]$ and $M(x)=M^\top(x)$. We further require that $M(x)$ is an even function such that $M(x)=M(-x)$. This measure is obviously symmetric so that moments
\begin{align*}
\int_{\mathbb{R}}x^n d\mu(x)=0
\end{align*}
when ($n$ mod 2)=1 and 
\begin{align*}
\int_{\mathbb{R}}x^n d\mu(x):=d_{n/2}
\end{align*}
when ($n$ mod 2)=0. 
Moreover, those symmetric matrix orthogonal polynomials $\{Q_n(x)\}_{n\in\mathbb{N}}$, which are orthogonal with $d\mu(x)$ in \eqref{dmu}, admit the following quasi-determinant forms.
\begin{proposition}
The even and odd order symmetric matrix orthogonal polynomials are expressed by
\begin{align*}
Q_{2n}(x)=\left|
\begin{array}{cccc}
d_0&\cdots&d_{n-1}&d_n\\
\vdots&&\vdots&\vdots\\
d_{n-1}&\cdots&d_{2n-2}&d_{2n-1}\\
\mathbb{I}_p&\cdots&x^{2n-2}\mathbb{I}_p&\boxed{x^{2n}\mathbb{I}_p}\end{array}
\right|,\quad Q_{2n+1}(x)=x\left|\begin{array}{cccc}
d_1&\cdots&d_n&d_{n+1}\\
\vdots&&\vdots&\vdots\\
d_n&\cdots&d_{2n}&d_{2n+1}\\
\mathbb{I}_p&\cdots&x^{2n-2}\mathbb{I}_p&\boxed{x^{2n}\mathbb{I}_p}\end{array}
\right|
\end{align*}
with orthogonality
\begin{align*}
\langle Q_{2n}(x),Q_{2m+1}(x)\rangle=0,\quad \langle Q_{2n}(x),Q_{2m}(x)\rangle=H_n^{(0)}\delta_{n,m},\quad  \langle Q_{2n+1}(x),Q_{2m+1}(x)\rangle=H_n^{(1)}\delta_{n,m},
\end{align*}
where 
\begin{align*}
H_n^{(k)}=\left|\begin{array}{cccc}
d_k&\cdots&d_{k+n}\\
\vdots&&\vdots\\
d_{k+n}&\cdots&\boxed{d_{k+2n}}\end{array}
\right|.
\end{align*}
\end{proposition}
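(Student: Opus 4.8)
The plan is to exploit the fact that monic matrix orthogonal polynomials are uniquely determined by the orthogonality relation \eqref{or1} together with positive-definiteness of the moment matrix, exactly as in the derivation of the general formula following \eqref{ls}. Thus it suffices to check that the two proposed quasi-determinants are monic, have the correct parity, and satisfy the orthogonality conditions; uniqueness then identifies them with $P_{2n}$ and $P_{2n+1}$. The whole reduction is driven by the checkerboard structure of the moments: since $M(-x)=M(x)$ and $e^{-x^2}$ is even, $x^nM(x)e^{-x^2}$ is odd for odd $n$, so $m_{2i+1}=0$ and $m_{2i}=d_i$. Consequently the Hankel moment matrix $(m_{i+j})$, after permuting indices by parity, becomes block diagonal with an even--even block $(d_{i+j})$ and an odd--odd block $(d_{i+j+1})$, each a principal submatrix of $(m_{i+j})$ and hence positive definite. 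This block splitting is what forces the polynomials to separate into an even and an odd family.

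For the even family I would posit the monic even trial form $Q_{2n}(x)=x^{2n}\mathbb{I}_p+\sum_{j=0}^{n-1}c_jx^{2j}$. By parity $\langle Q_{2n},x^{m}\rangle$ vanishes automatically for odd $m$, so the only nontrivial orthogonality constraints are $\langle Q_{2n},x^{2k}\rangle=0$ for $k=0,\dots,n-1$, i.e. $\sum_{j=0}^{n-1}c_jd_{j+k}=-d_{n+k}$. This is a non-commutative linear system with coefficient matrix the positive-definite block $(d_{i+j})_{i,j=0}^{n-1}$, so Proposition \ref{nls} applies and expresses the $c_j$, and therefore $Q_{2n}$ itself, as the stated quasi-determinant. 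Evaluating the remaining inner product at $k=n$ (equivalently, pushing $\langle\cdot,x^{2n}\rangle$ onto the bottom row of the quasi-determinant, which replaces $x^{2c}\mathbb{I}_p$ by $d_{c+n}$) produces precisely the Hankel quasi-determinant $H_n^{(0)}$, giving the normalization $\langle Q_{2n},Q_{2n}\rangle=H_n^{(0)}$.

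The odd family is handled identically after factoring $Q_{2n+1}(x)=x\,\widetilde Q(x)$ with $\widetilde Q$ monic even of degree $2n$. Moving the scalar $x$ across the measure turns the odd-power conditions $\langle Q_{2n+1},x^{2k+1}\rangle=0$ into $\langle\widetilde Q,x^{2k+2}\rangle=0$, which is the same type of system but with the shifted block $(d_{i+j+1})$; Proposition \ref{nls} again yields the quasi-determinant form, and the terminal evaluation gives $\langle Q_{2n+1},Q_{2n+1}\rangle=H_n^{(1)}$. Finally, the cross relation $\langle Q_{2n},Q_{2m+1}\rangle=0$ is immediate and needs no computation: $Q_{2n}$ is even, $Q_{2m+1}$ is odd, so the integrand $Q_{2n}(x)\,d\mu(x)\,Q_{2m+1}^\top(x)$ is odd and integrates to zero over $\mathbb{R}$.

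The main obstacle is bookkeeping rather than ideas. One must justify (a) that the two parity blocks are genuinely positive definite, which follows because they are principal submatrices of the positive-definite moment matrix after a parity permutation, so the two linear systems are uniquely solvable and the even and odd trial forms are legitimate; and (b) that the inner product may be applied row-wise to the $x$-dependent bottom row of each quasi-determinant. The latter rests on linearity of the quasi-determinant in its last row, valid here because the complementary minor and the last column are $x$-independent, combined with the standard fact that a quasi-determinant with two equal rows vanishes (this is what kills $\langle Q_{2n},x^{2k}\rangle$ for $k<n$, where the evaluated bottom row $(d_k,\dots,d_{k+n})$ duplicates an earlier row of the Hankel block). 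Both are routine properties of quasi-determinants recorded in \cite{gelfand05}, so no genuine difficulty remains once the parity decoupling is set up.
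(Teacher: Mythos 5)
Your argument is correct and arrives at all the stated formulas, but it is packaged differently from the paper's proof. The paper starts from the general $(2n+1)\times(2n+1)$ quasi-determinant for the monic degree-$2n$ polynomial built from the full moment sequence $m_0=d_0,\ m_1=0,\ m_2=d_1,\dots$, permutes rows and columns by parity so that the Hankel block becomes $\operatorname{diag}(A_n^{(0)},A_n^{(1)})$ with last column $(v,0)^\top$, and reads the reduced formula off the expansion $x^{2n}\mathbb{I}_p-r\,\operatorname{diag}(A_n^{(0)},A_n^{(1)})^{-1}(v,0)^\top$; the zero block in the last column is what deletes the odd part, and the orthogonality is asserted without proof. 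You instead posit an even (resp.\ $x$ times even) monic ansatz, note that the odd-power conditions hold vacuously by parity, solve the surviving linear system over $(d_{i+j})$ resp.\ $(d_{i+j+1})$ via Proposition \ref{nls}, and identify the result with $Q_{2n}$, $Q_{2n+1}$ by uniqueness of monic orthogonal polynomials. Both routes hinge on the same parity block-diagonalization of the moment matrix, but yours avoids the tacit use of permutation-invariance of quasi-determinants and actually supplies the orthogonality and the normalizations $H_n^{(0)},H_n^{(1)}$ (via equal-row vanishing and evaluation of the bottom row), which the paper omits. The one place to be careful is your step (b): in a non-commutative ring, pairing $x^{2n}\mathbb{I}_p-\sum_j x^{2j}u_j$ with $\int(\cdot)\,d\mu\,x^{2k}$ produces $d_{n+k}-\sum_j u_j d_{j+k}$, whereas substituting $d_{j+k}$ into the bottom row of the quasi-determinant produces $d_{n+k}-\sum_j d_{j+k}u_j$; these agree only up to transpose (they coincide as vanishing conditions because $A_n^{(\ell)}$ and the $d_i$ are symmetric, so the two conventions for the quasi-determinant differ by an overall transpose). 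This ordering issue is equally glossed over in the paper, which silently switches conventions between $P_n(x)$ and $Q_{2n}(x)$, so it is a shared imprecision rather than a gap in your argument; still, one sentence fixing the convention (pair the moments-in-bottom-row form with $\int(\cdot)\,d\mu\,x^{2k}$, or the powers-in-bottom-row form with $\int x^{2k}\,d\mu\,(\cdot)$) would make the verification airtight.
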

\begin{proof}
Here we give expressions for even order polynomials,  which are similar to the odd ones. Thus we omit odd ones here. By rearranging rows and columns of even and odd orders, we notice that 
\begin{align*}
Q_{2n}(x)&=\left|\begin{array}{cccccc}
d_0&0&\cdots&d_{n-1}&0&d_n\\
0&d_1&\cdots&0&d_n&0\\
\vdots&\vdots&&\vdots&\vdots&\vdots\\
d_{n-1}&0&\cdots&d_{2n-2}&0&d_{2n-1}\\
0&d_n&\cdots&0&d_{2n-1}&0\\
\mathbb{I}_p&x\mathbb{I}_p&\cdots&x^{2n-2}\mathbb{I}_p&x^{2n-1}\mathbb{I}_p&\boxed{x^{2n}\mathbb{I}_p}\end{array}
\right|\\
&=x^{2n}\mathbb{I}_p-\left(
\mathbb{I}_p,\,\cdots,\,x^{2n-2}\mathbb{I}_p,\,x\mathbb{I}_p,\cdots,\,x^{2n-1}\mathbb{I}_p
\right)\left(\begin{array}{cc}
A_n^{(0)}&0\\
0&A_n^{(1)}\end{array}
\right)^{-1}\left(\begin{array}{c}
v\\0\end{array}
\right),
\end{align*}
where $A_n^{(k)}=\left(
d_{i+j+k}
\right)_{i,j=0}^{n-1}$ and $v=(d_n,\cdots,d_{2n-1})$. Moreover, since the last term in the above equation is actually $(\mathbb{I}_p,\cdots,x^{2n-2}\mathbb{I}_p)(A_n^{(0)})^{-1}v$, one is led to the expression of $Q_{2n}(x)$.

On the other hand, the orthogonality is easily verified by taking the advantage of the expansion of quasi-determinant, and we omit it here.
\end{proof}

Compared with expressions of matrix orthogonal polynomials $\{P^{(\ell)}_n(x)\}_{n\in\mathbb{N}}$, we know that 
\begin{align*}
\text{$Q_{2n}(x)=P^{(0)}_n(x^2)$, \quad $Q_{2n+1}(x)=xP_n^{(1)}(x^2)$ }
\end{align*}
if moments $\{d_i\}_{i\in\mathbb{N}}$ coincide with $\{m_i\}_{i\in\mathbb{N}}$.
However, symmetric matrix orthogonal polynomials admit special properties. For example,
the modification of measure leads to reduced three term recurrence relations.

\begin{proposition}\label{prop4}
Regarding with matrix orthogonal polynomials $\{Q_n(x)\}_{n\in\mathbb{N}}$ with symmetric measure, there are reduced three term recurrences
\begin{align}\label{lplv1}
xQ_{2n}(x)=Q_{2n+1}(x)+\xi_nQ_{2n-1}(x),\quad xQ_{2n+1}(x)=Q_{2n+2}(x)+\zeta_{n+1}Q_{2n}(x),
\end{align}
where $\xi_{n}=H_{n}^{(0)}\left(H_{n-1}^{(1)}\right)^{-1}$ and $\zeta_{n+1}=H_{n}^{(1)}\left(H_{n}^{(0)}
\right)^{-1}$. 
\end{proposition}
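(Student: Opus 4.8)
The plan is to argue directly from the orthogonality relations of the $\{Q_n\}$, exploiting the parity of the weight. First I would record two structural facts. From the quasi-determinant forms in the preceding proposition, each $Q_{2n}(x)$ is a monic even polynomial of degree $2n$, while each $Q_{2n+1}(x)$ is $x$ times an even polynomial, hence a monic odd polynomial of degree $2n+1$; this parity is forced by the evenness $M(-x)=M(x)$ of the weight, equivalently by the vanishing of all odd moments. Second, because $x$ is a scalar it commutes through the inner product, so multiplication by $x$ is self-adjoint: $\langle xf,g\rangle=\langle f,xg\rangle$ for all $f,g\in\mathbb{R}^{p\times p}[x]$. (The symmetry $d\mu=d\mu^\top$ has already been used to guarantee that the $Q_n$ form a single orthogonal family with well-defined normalizations $H_n^{(k)}$.)

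For the first identity I would expand the degree-$(2n+1)$ polynomial $xQ_{2n}(x)$ in the basis $\{Q_k\}$, writing $xQ_{2n}(x)=\sum_k c_kQ_k(x)$ with matrix coefficients acting on the left, so that $c_k=\langle xQ_{2n},Q_k\rangle H_k^{-1}$. Moving $x$ across by self-adjointness gives $\langle xQ_{2n},Q_k\rangle=\langle Q_{2n},xQ_k\rangle$, which vanishes whenever $\deg(xQ_k)=k+1<2n$; thus only $k=2n+1,2n,2n-1$ can survive. The top coefficient is $\mathbb{I}_p$ since both $xQ_{2n}$ and $Q_{2n+1}$ are monic of degree $2n+1$, and the middle coefficient $c_{2n}$ vanishes by parity, as the integrand in $\langle xQ_{2n},Q_{2n}\rangle$ is (odd)$\times$(even)$\times$(even)$\times$(even)$=$odd and integrates to zero over the symmetric line. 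This leaves $xQ_{2n}=Q_{2n+1}+\xi_nQ_{2n-1}$ with $\xi_n=\langle xQ_{2n},Q_{2n-1}\rangle(H_{n-1}^{(1)})^{-1}$; rewriting $\langle xQ_{2n},Q_{2n-1}\rangle=\langle Q_{2n},xQ_{2n-1}\rangle$ and noting that $xQ_{2n-1}$ is monic even of degree $2n$, hence equals $Q_{2n}$ plus lower even-degree terms, orthogonality collapses this to $H_n^{(0)}$, so $\xi_n=H_n^{(0)}(H_{n-1}^{(1)})^{-1}$.

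The second identity follows from the mirror-image computation applied to $xQ_{2n+1}(x)$, which is monic even of degree $2n+2$. The same orthogonality-and-parity bookkeeping forces $xQ_{2n+1}=Q_{2n+2}+\zeta_{n+1}Q_{2n}$, and self-adjointness together with the monic expansion of $xQ_{2n}$ identifies $\zeta_{n+1}=\langle xQ_{2n+1},Q_{2n}\rangle(H_n^{(0)})^{-1}=\langle Q_{2n+1},xQ_{2n}\rangle(H_n^{(0)})^{-1}=H_n^{(1)}(H_n^{(0)})^{-1}$. As a consistency check, substituting $Q_{2n}(x)=P_n^{(0)}(x^2)$ and $Q_{2n+1}(x)=xP_n^{(1)}(x^2)$ and setting $y=x^2$ turns the second recurrence into the Geronimus transformation \eqref{gt} at $\ell=0$, and the first (after cancelling the common factor $x$) into relation \eqref{n1} at $\ell=0$.

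I expect the only real subtlety, rather than an obstacle, to be the non-commutative bookkeeping: one must keep the matrix coefficients $\xi_n,\zeta_{n+1}$ on the left and the inverse normalizations $H^{-1}$ on the right throughout, and the self-adjointness step $\langle xf,g\rangle=\langle f,xg\rangle$ should be justified from $x$ being a scalar rather than silently assumed. Everything else is the standard symmetric three-term recurrence argument specialized to an even weight.
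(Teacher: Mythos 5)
Your proof is correct and follows essentially the same route as the paper's: expand $xQ_{2n}$ (resp.\ $xQ_{2n+1}$) in the orthogonal basis, use parity and orthogonality to kill all but one lower-order term, and identify the surviving coefficient as $H_n^{(0)}(H_{n-1}^{(1)})^{-1}$ (resp.\ $H_n^{(1)}(H_n^{(0)})^{-1}$) via the self-adjointness of multiplication by $x$. The paper is merely terser, expanding directly in the odd-indexed polynomials rather than invoking parity to eliminate the $Q_{2n}$ term afterwards.
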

\begin{proof}
We prove the even case here. By noting that $xQ_{2n}(x)$ is monic and it is composed by odd order polynomials, we know that
\begin{align*}
xQ_{2n}(x)=Q_{2n+1}(x)+\sum_{k=0}^{n-1}\xi_{2n,k}Q_{2k+1}(x),\quad \xi_{2n,k}=\langle xQ_{2n}(x),Q_{2k+1}(x)\rangle (H_{k}^{(1)})^{-1}.
\end{align*}
Moreover, by making use of the orthogonality, we know that only $k=n-1$ contributes and the result is easily obtained.
\end{proof}

The three term recurrence relation could be written as the following matrix form
\begin{align*}
x\left(\begin{array}{c}
Q_0(x)\\
Q_1(x)\\
Q_2(x)\\
\vdots\end{array}
\right)=\left(\begin{array}{ccccc}
0&1&&&\\
\zeta_1&0&1&&\\
&\xi_1&0&1&\\
&&\ddots&\ddots&\ddots\end{array}
\right)\left(\begin{array}{c}
Q_0(x)\\
Q_1(x)\\
Q_2(x)\\
\vdots\end{array}
\right)
\end{align*}
If we denote $Q=(Q_0(x),Q_1(x),\cdots)^\top$, then the above matrix form is equivalent to
\begin{align*}
xQ=(\Lambda+\Lambda^{-1}\gamma)Q:=\mathcal{L}Q,
\end{align*}
where $\gamma=\text{diag}(\zeta_1,\xi_1,\zeta_2,\xi_2,\cdots)$.

Let's impose a time flow on the measure such that $\partial_td\mu(x;t)=x^2d\mu(x;t)$, from which we could get a general $d\mu(x;t)$ and $d\mu(x;t)=\exp(x^2 t)d\mu(x)$ (c.f. Sec. \ref{ho} but note that odd flows $t_{2j+1}$ are trivial flows in this case). Such an assumption is called a generalized Freud weight and examples in scalar case were considered in \cite{aptekarev97,clarkson16}. Furthermore, with this assumption, moments $\{d_i\}_{i\in\mathbb{N}}$ satisfy time evolutions $\partial_t d_i=d_{i+1}$ and $\{H_n^{(\ell)},\ell=0,1\}_{n\in\mathbb{N}}$ in this case satisfy the non-abelian Toda equation separately. Moreover, when it comes to those reduced matrix orthogonal polynomials with symmetric measure $\{Q_n(x;t)\}_{n\in\mathbb{N}}$, one can state the following proposition.
\begin{proposition}
The time flow of $\{Q_n(x;t)\}_{n\in\mathbb{N}}$ satisfies
\begin{align}\label{lplv2}
\partial_t Q_{2n}(x;t)=\alpha_nQ_{2n-2}(x;t),\quad \partial_t Q_{2n+1}(x;t)=\beta_nQ_{2n-1}(x;t),
\end{align}
where $\alpha_n=-H_{n}^{(0)}\left(
H_{n-1}^{(0)}
\right)^{-1}$ and $\beta_n=-H_{n}^{(1)}\left(
H_{n-1}^{(1)}
\right)^{-1}$.
\end{proposition}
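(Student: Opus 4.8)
The plan is to follow the template of Proposition \ref{timeevo}, exploiting the fact that the modified measure evolution $\partial_t d\mu(x;t)=x^2 d\mu(x;t)$ raises polynomial degree by two and respects the even/odd parity splitting of the family $\{Q_n(x;t)\}_{n\in\mathbb{N}}$. I treat the even case first. Since $Q_{2n}(x;t)$ is monic of degree $2n$ and contains only even powers of $x$, its $t$-derivative carries no $x^{2n}$ term and again only even powers, so it expands in the lower even-order family:
\begin{align*}
\partial_t Q_{2n}(x;t)=\sum_{k=0}^{n-1}\gamma_{n,k}Q_{2k}(x;t),\qquad \gamma_{n,k}=\langle \partial_t Q_{2n}(x;t),Q_{2k}(x;t)\rangle (H_k^{(0)})^{-1}.
\end{align*}

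Next I would differentiate the orthogonality relation $\langle Q_{2n}(x;t),Q_{2k}(x;t)\rangle=0$ for $k<n$. Since $\partial_t d\mu=x^2 d\mu$ and $x^2$ is scalar, the product rule gives
\begin{align*}
\langle \partial_t Q_{2n},Q_{2k}\rangle+\langle Q_{2n},x^2 Q_{2k}\rangle+\langle Q_{2n},\partial_t Q_{2k}\rangle=0.
\end{align*}
The third term vanishes because $\partial_t Q_{2k}$ is an even polynomial of degree at most $2k-2<2n$ and $Q_{2n}$ is orthogonal to every even polynomial of strictly lower degree; the middle term is zero unless $k=n-1$, since $x^2 Q_{2k}$ has degree $2k+2$. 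Hence only the $k=n-1$ coefficient survives, exactly as in Proposition \ref{timeevo}, but now with a two-step rather than a one-step index shift.

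Finally, for $k=n-1$ I would evaluate $\langle Q_{2n},x^2 Q_{2n-2}\rangle$ by writing the monic degree-$2n$ even polynomial $x^2 Q_{2n-2}$ as $Q_{2n}$ plus lower even-order terms; orthogonality collapses this to $\langle Q_{2n},Q_{2n}\rangle=H_n^{(0)}$, so $\gamma_{n,n-1}=-H_n^{(0)}(H_{n-1}^{(0)})^{-1}=\alpha_n$. The odd case is entirely parallel: $\partial_t Q_{2n+1}$ expands in the odd family $\{Q_{2k+1}\}_{k<n}$, the same differentiation and degree count isolate $k=n-1$, and $\langle Q_{2n+1},x^2 Q_{2n-1}\rangle=H_n^{(1)}$ yields $\beta_n=-H_n^{(1)}(H_{n-1}^{(1)})^{-1}$. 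The only genuinely new ingredient beyond Proposition \ref{timeevo} is the parity bookkeeping, namely verifying that the even/odd structure is preserved under the flow so that $\partial_t Q_{2n}$ and $\partial_t Q_{2n+1}$ each remain within a single parity class; this is the step I would check most carefully, though it follows directly from $M(x)=M(-x)$ together with the $x^2$ modification of the measure.
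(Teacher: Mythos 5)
Your proof is correct and follows essentially the same route as the paper's: differentiate the orthogonality relation under $\partial_t d\mu = x^2 d\mu$, use parity and degree counting to isolate the $k=n-1$ term, and evaluate $\langle Q_{2n},x^2Q_{2n-2}\rangle=H_n^{(0)}$ by monic expansion. The only cosmetic difference is that you expand $\partial_t Q_{2n}$ in the even subfamily from the outset, whereas the paper expands over all $Q_j$ with $j<2n$ and lets orthogonality kill the odd contributions; both are valid.
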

\begin{proof}
By introducing the time flow, one can assume
\begin{align*}
\partial_t Q_{2n}(x;t)=\sum_{j=0}^{2n-1} \xi_{n,j}Q_j(x;t).
\end{align*}
Taking $t$-derivative to the orthogonal relation $\langle Q_{2n},Q_j\rangle=0$ for $j<2n$, one gets 
\begin{align*}
\langle\partial_t Q_{2n},Q_j\rangle+\langle x^2Q_{2n},Q_j\rangle=0.
\end{align*}
From the orthogonality of $\{Q_n(x)\}_{n\in\mathbb{N}}$, one knows that only $j=2n-2$ contributes, and 
\begin{align*}
\xi_{n,2n-2}=-\langle Q_{2n},x^2Q_{2n-2}\rangle\langle Q_{2n-2},Q_{2n-2}\rangle^{-1}=-H_{n}^{(0)}\left(
H_{n-1}^{(0)}
\right)^{-1}.
\end{align*}
The odd case could be similarly proved.
\end{proof}
The coefficients $\{\alpha_n,\beta_n\}_{n\in\mathbb{N}}$ are connected with $\{\xi_n,\zeta_n\}_{n\in\mathbb{N}}$ by
\begin{align*}
\alpha_n=-\xi_n\zeta_n,\quad \beta_n=-\zeta_{n+1}\xi_n
\end{align*}
and the Lax pair of non-commutative Volterra lattice is then given by
\begin{align*}
\partial_t Q=-(\mathcal{L}^2)_-Q.
\end{align*}
The compatibility condition gives rise to the Lax representation
\begin{align*}
\partial_t \mathcal{L}=[\mathcal{L},(\mathcal{L}^2)_-],\quad \mathcal{L}=\Lambda+\Lambda^{-1}\gamma,
\end{align*}
and the nonlinear form of non-commutative Volterra lattice could be written as
\begin{align*}
\partial_t \xi_n=\zeta_{n+1}\xi_n-\xi_n\zeta_n,\quad \partial_t \zeta_{n+1}=\xi_{n+1}\zeta_{n+1}-\zeta_{n+1}\xi_n.
\end{align*}
By substituting $\xi_n=H_n^{(0)}\left(
H_{n-1}^{(1)}
\right)^{-1}$ and $\zeta_{n+1}=H_n^{(1)}\left(
H_n^{(0)}
\right)^{-1}$, one obtains a bilinear form
\begin{align*}
&(H_n^{(0)})^{-1}\partial_t H_n^{(0)}-(H_{n-1}^{(1)})^{-1}\partial_t H_{n-1}^{(1)}=(H_n^{(0)})^{-1}H_n^{(1)}-(H_{n-1}^{(0)})^{-1}H_{n-1}^{(1)},\\
&(H_n^{(1)})^{-1}\partial_t H_n^{(1)}-(H_n^{(0)})^{-1}\partial_t H_n^{(0)}=(H_n^{(1)})^{-1}H_{n+1}^{(0)}-(H_{n-1}^{(1)})^{-1}H_n^{(0)}.
\end{align*}
We call it the bilinear form since these two equations could be verified by non-commutative Jacobi identity directly.
Obviously, those two equations connecting two families of solutions of non-commutative Toda lattice, and thus an auto-B\"acklund transformation of the non-commutative Toda lattice.
\begin{remark}
 In fact, in \cite[eq. (5)-(6)]{popowicz83}, the author proposed the same B\"acklund transformation for non-commutative Toda lattice in $2+1$ dimension and in \cite{o'connell20}, it was interpreted as a diffusion with generator and a stochastic version of a series of B\"acklund transformations between non-commutative Toda systems with different numbers of particles. Moreover, if we denote $\gamma_{2n-1}=\zeta_n$ and $\gamma_{2n}=\xi_n$, the nonlinear form could be written in a unified way $\partial_t \gamma_n=\gamma_{n+1}\gamma_n-\gamma_n\gamma_{n-1}$, which could be obtained from equation \eqref{t2flow} by taking $a_n=0$.  In an earlier work of Moser \cite{moser75}, it was pointed out that if one considers an invariant manifold on Toda lattice, then Volterra lattice appears. This idea holds true for the non-commutative case.
 \end{remark}
We conclude our results as follows.
\begin{proposition}\label{prop5}
The non-commutative Volterra lattice
\begin{align}\label{nclv}
\partial_t \gamma_{n}=\gamma_{n+1}\gamma_{n}-\gamma_{n}\gamma_{n-1}
\end{align}
has a Lax pair $\partial_t\mathcal{L}=[\mathcal{L},(\mathcal{L}^2)_-]$ with Lax matrix
\begin{align*}
\mathcal{L}=\left(\begin{array}{ccccc}
0&1&&&\\
\gamma_1&0&1&&\\
&\gamma_2&0&1&\\
&&\ddots&\ddots&\ddots
\end{array}
\right).
\end{align*}
Moreover, solutions of the non-abelian Volterra lattice are given by 
\begin{align*}
\gamma_{2n}=H_{n}^{(0)}\left(H_{n-1}^{(1)}\right)^{-1},\quad \gamma_{2n+1}=H_{n}^{(1)}\left(H_{n}^{(0)}
\right)^{-1},
\end{align*}
where $\{H_n^{(\ell)},\ell=0,1\}_{n\in\mathbb{N}}$ are Hankel quasi-determinants expressed by
\begin{align*}
H_n^{(\ell)}=\left|\begin{array}{ccc}
d_\ell&\cdots&d_{\ell+n}\\
\vdots&&\vdots\\
d_{\ell+n}&\cdots&\boxed{d_{\ell+2n}}\end{array}
\right|
\end{align*}
with time evolution $\partial_t d_i=d_{i+1}$.
\end{proposition}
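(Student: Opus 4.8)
The plan is to split the statement into two parts: the equivalence between the Lax equation and the nonlinear form \eqref{nclv}, and the verification that the quasi-determinants give solutions. For the Lax representation I would write $\mathcal{L}=\Lambda+\Lambda^{-1}\gamma$ with $\gamma=\mathrm{diag}(\gamma_1,\gamma_2,\dots)$ and expand $\mathcal{L}^2=\Lambda^2+\Lambda(\Lambda^{-1}\gamma)+(\Lambda^{-1}\gamma)\Lambda+(\Lambda^{-1}\gamma)^2$. The key observation is that the two middle terms $\Lambda(\Lambda^{-1}\gamma)=\gamma$ and $(\Lambda^{-1}\gamma)\Lambda$ are block diagonal while $\Lambda^2$ is strictly upper, so the strictly block lower triangular part collapses to $(\mathcal{L}^2)_-=(\Lambda^{-1}\gamma)^2$. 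Since any operator commutes with its own square, $[\mathcal{L},(\mathcal{L}^2)_-]=[\Lambda,(\Lambda^{-1}\gamma)^2]$, and simple band arithmetic confines this commutator to a single subdiagonal, matching the band structure of $\partial_t\mathcal{L}$. Reading off that subdiagonal block by block, keeping the matrix multiplication order intact, produces precisely $\partial_t\gamma_n=\gamma_{n+1}\gamma_n-\gamma_n\gamma_{n-1}$.

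For the solutions, rather than verifying the bilinear identities by hand I would exploit the compatibility of two linear problems for the wave vector $Q=(Q_0,Q_1,\dots)^\top$. The reduced recurrences \eqref{lplv1} of Proposition \ref{prop4} package as the spectral problem $xQ=\mathcal{L}Q$ with $\gamma_{2n}=\xi_n=H_n^{(0)}(H_{n-1}^{(1)})^{-1}$ and $\gamma_{2n+1}=\zeta_{n+1}=H_n^{(1)}(H_n^{(0)})^{-1}$, that is, exactly the claimed formulas; and the time evolutions \eqref{lplv2}, after using $\alpha_n=-\xi_n\zeta_n$ and $\beta_n=-\zeta_{n+1}\xi_n$, package as $\partial_t Q=-(\mathcal{L}^2)_-Q$, because $(\mathcal{L}^2)_-=(\Lambda^{-1}\gamma)^2$ carries exactly the second-subdiagonal blocks $\xi_n\zeta_n$ and $\zeta_{n+1}\xi_n$. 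Differentiating $xQ=\mathcal{L}Q$ in $t$ (with $x$ independent of $t$) and substituting $\partial_t Q=-(\mathcal{L}^2)_-Q$ gives $(\partial_t\mathcal{L})Q=[\mathcal{L},(\mathcal{L}^2)_-]Q$; since $\{Q_n\}$ form a basis, the same Riesz-representation argument used in Proposition \ref{hof} upgrades this to the operator identity $\partial_t\mathcal{L}=[\mathcal{L},(\mathcal{L}^2)_-]$, whence the quasi-determinants solve \eqref{nclv}. As an alternative one may verify the bilinear form stated just above the proposition directly from the non-commutative Jacobi identity \eqref{ncjacobi} together with the Hankel derivative formula \eqref{derivative}.

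I expect the main obstacle to be the non-commutative bookkeeping. Because the blocks are $p\times p$ matrices I must track the left/right multiplication order and the index shift hidden in $\Lambda^{-1}\gamma$ so that the commutator yields $\gamma_{n+1}\gamma_n-\gamma_n\gamma_{n-1}$ in the correct order with no stray diagonal terms surviving; and in the solution part the delicate point is confirming that $(\mathcal{L}^2)_-$ applied to $Q$ reproduces the entrywise evolutions $\alpha_nQ_{2n-2}$ and $\beta_nQ_{2n-1}$ of \eqref{lplv2}, i.e. that composing two recurrence steps matches the second-subdiagonal blocks of $(\mathcal{L}^2)_-$ under the relations $\alpha_n=-\xi_n\zeta_n$ and $\beta_n=-\zeta_{n+1}\xi_n$.
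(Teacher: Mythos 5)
Your proposal is correct and takes essentially the same route as the paper, which proves this proposition by assembling the preceding results exactly as you do: the spectral problem from Proposition \ref{prop4}, the time evolution \eqref{lplv2} rewritten via $\alpha_n=-\xi_n\zeta_n$, $\beta_n=-\zeta_{n+1}\xi_n$ as $\partial_t Q=-(\mathcal{L}^2)_-Q$, and the compatibility condition $\partial_t\mathcal{L}=[\mathcal{L},(\mathcal{L}^2)_-]$ read off along the subdiagonal. Your explicit band computation $(\mathcal{L}^2)_-=(\Lambda^{-1}\gamma)^2$ and the reduction $[\mathcal{L},(\mathcal{L}^2)_-]=[\Lambda,(\Lambda^{-1}\gamma)^2]$ merely make concrete what the paper asserts, and your fallback via the non-commutative Jacobi identity is the same check the paper invokes for the bilinear form.
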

\begin{remark}
The non-commutative Volterra lattice could be regarded as a discrete version of non-commutative KdV equation. Taking 
\begin{align*}
\gamma_n=1+\epsilon^2 r, \quad \gamma_{n\pm 1}=1+\epsilon^2 r(x\pm\epsilon),
\end{align*}
then equation \eqref{nclv} could be written as
\begin{align*}
r_t=2\epsilon r_x+\epsilon^3 (r_xr+rr_x)+\frac{\epsilon^3}{3} r_{xxx}+o(\epsilon^4),
\end{align*}
which is the non-commutative KdV equation \cite{dimakis00,hamanaka07} after the scaling $T=\epsilon^3 t$ and $X=x-2\epsilon t$ and taking the limit $\epsilon\to0$.
\end{remark}
One could consider a non-abelian Volterra hierarchy in analog of Toda hierarchy. Since odd order flows are trivial in this case, i.e. $\partial_{t_{2n+1}}Q_j(x;t)=0$ for arbitrary $j$, we only consider even order flows. Similar with the proof in Proposition \ref{hof}, one could compute that
\begin{align*}
\partial_{t_{2n}}Q=-(\mathcal{L}^{2n})_-Q,
\end{align*}
and the compatibility condition is 
\begin{align*}
\partial_{t_{2n}}\mathcal{L}=[\mathcal{L},(\mathcal{L}^n)_-],
\end{align*}
which is a discretization of the non-commutative KdV hierarchy.

\section{Concluding remarks}
Tau function is a fundamental viewpoint in integrable theory and it links different subjects in modern mathematical physics. In this paper, we focus on the ``non-commutative tau functions'' expressed by quasi-determinants, which in fact are ratios of tau functions in scalar case. Normalization factors of matrix orthogonal polynomials acting as tau functions are used to formulate several non-abelian integrable systems.
Although we find that those normalization factors satisfy certain non-abelian integrable lattices, it is still unknown to us whether there are linear differential systems or Virasoro constraints to characterize those tau functions.

Besides, it is also mysterious to know whether there are any generalizations of $B_\infty$ or $C_\infty$-types of Toda lattices into the non-commutative case. We hope to investigate those non-commutative integrable lattices by making use of matrix Cauchy bi-orthogonal polynomials and matrix skew-orthogonal polynomials. Along this way, technique of quasi-Pfaffian needs to be largely explored. Those problems will be discussed in some continued works.

\section*{Acknowledgement}
The author would like to thank Prof. Chunxia Li, Prof Jacek Szmigielski and Prof. Guo-Fu Yu for helpful discussions and suggestions, and thanks Prof. Vladimir Retakh for bringing me some useful references. This work is partially funded by grants (NSFC12101432, NSFC11971322).


\begin{thebibliography}{1}

\bibitem{adler99}
M. Adler, E. Horozov and P. van Moerbeke.
\newblock The Pfaff lattice and skew-orthogonal polynomials.
\newblock \emph{Int. Math Res. Not.}, 11 (1999), 569-588.

\bibitem{adler95}
M. Adler and P. van Moerbeke.
\newblock{Matrix integrals, Toda symmetries, Virasoro constraints, and orthogonal polynomials}.
\newblock \emph{Duke Math J.}, 80 (1995), 863-911.

\bibitem{adler03}
M. Adler and P. van Moerbeke.
\newblock{Recursion relations for unitary integrals, combinatorics and the Toeplitz lattice}.
\newblock {\em Comm. Math. Phys.}, 237 (2003), 397-440.

\bibitem{adler09}
M. Adler, P. van Moerbeke and P. Vanhaecke.
\newblock {Moment matrices and multi-component KP, with applications to random matrix theory}.
\newblock {\em Commun. Math. Phys.,} 286 (2009), 1-38.

\bibitem{alvarez17}
C. \'Alvarez-Fern\'andez, G. Ariznabarreta, J. Gar\'cia-Ardila, M. Ma\~nas and F. Marcell\'an.
\newblock{Christoffel transformations for matrix orthogonal polynomials in the real line and the non-Abelian 2D Toda lattice hierarchy}.
\newblock \emph{Int. Math. Res. Not.}, (2017), 1285-1341.


\bibitem{aptekarev97}
A. Aptekarev, A. Branquinho and F. Marcell\'an.
\newblock{Toda-type differential equations for the recurrence coefficients of orthogonal polynomials in Freud transformation}.
\newblock \emph{J. Comp. Appl. Math.}, 78 (1997), 139-160.


\bibitem{ari14}
G. Ariznabarreta and M. Ma\~nas.
\newblock  Multivariate orthogonal polynomials and integrable systems.
\newblock \emph{Adv. Math.}, 302 (2014), 628-739.

\bibitem{bertola21}
M. Bertola.
\newblock Abelianization of matrix orthogonal polynomials.
\newblock arXiv: 2107.12998.

\bibitem{brezinski80}
C. Brezinski.
\newblock Pad\'e-type approximation and general orthogonal polynomials.
\newblock International Series of Numerical Mathematics, 50. \emph{Birkkh\"auser Verlag, Basel-Boston, Mass.,} 1980.

\bibitem{brezinski90}
C. Brezinski.
\newblock A direct proof of the Christoffel-Darboux identity and its equivalence to the recurrence relationship.
\newblock \emph{J. Comp. Appl. Math.}, 32 (1990), 17-25.

\bibitem{branquinho20}
A. Branquinho, A. Foulqui\'e-Moreno and J. Garci\'a-Ardila.
\newblock{Matrix Toda and Volterra lattices}.
\newblock \emph{Appl. Math. Comp.}, 365 (2020), 124722.

\bibitem{cafasso13}
M. Cafasso and D. Manuel.
\newblock Non-commutative Painlev\'e equations and Hermite-type matrix orthogonal polynomials.
\newblock \emph{Commun. Math. Phys.}, 326 (2014), 559-583.

\bibitem{chang182}
X. Chang, Y. He, X. Hu and S. Li.
\newblock  Partial-skew-orthogonal polynomials and related integrable lattice with Pfaffian tau-functions.
\newblock \emph{Commun. Math. Phys.}, 364 (2018), 1069-1119.

\bibitem{chang18}
X. Chang, X. Hu and S. Li.
\newblock Moment modification, multipeakons, and nonisospectral generalizations.
\newblock \emph{J. Diff. Equa.}, 265 (2018), 3858-3887.



\bibitem{chu08}
M. Chu.
\newblock Linear algebra algorithms as dynamical systems.
\newblock \emph{Acta Numerica}, 17 (2008), 1-86.

\bibitem{clarkson16}
P. Clarkson, K. Jordaan and A. Kelil.
\newblock A generalized Freud weight.
\newblock \emph{Stud. Appl. Math.}, 136 (2016), 288-320.


\bibitem{o'connell13}
N. O'Connell.
\newblock{Geometric RSK and the Toda lattice}.
\newblock \emph{Illinious J. Math.}, 57 (2013), 883-918.

\bibitem{o'connell20}
N. O'Connell.
\newblock{Interacting diffusions on positive definite matrices}.
\newblock \emph{Prob. Theo. Rel. Fiel.}, 180 (2021), 679-726.

\bibitem{damanik08}
D. Damanik, A. Pushnitski and B. Simon.
\newblock{The analytic theory of matrix orthogonal polynomials.}
\newblock arXiv: 0711.2703.

\bibitem{deift03}
P. Deift.
\newblock{Orthogonal Polynomials and Random Matrices: A Riemann-Hilbert Approach}.
\newblock Courant Lecture Notes 3, American Mathematical Society, 2000.

\bibitem{di11}
P. Di Francesco and R. Kedem.
\newblock{Non-commutative integrability, paths and quasi-determinants}.
\newblock \emph{Adv. Math.}, 10 (2011), 97-152.


\bibitem{dimakis00}
A. Dimakis and F. Muller-Hoissen.
\newblock{non-commutative Korteweg-de-Vries equation}.
\newblock Phys. Lett. A, 278 (2000), 139.

\bibitem{duits20}
M. Duits and A. Kuijlaars.
\newblock{The two periodic Aztec diamond and matrix valued orthogonal polynomials}.
\newblock to appear in \emph{JEMS}, arXiv: 1712.05636.

\bibitem{etingof98}
P. Etingof, I. Gelfand and V. Retakh.
\newblock{
Nonabelian Integrable Systems, Quasideterminants, and Marchenko Lemma, 
}
\newblock \emph{Math. Research Lett.}, 5 (1998), 1-12.


\bibitem{forrester10}
P. Forrester.
\newblock {Log-gases and random matrices}.
\newblock  {London Mathematical Society Monographs Series 34.}
\newblock \emph{ Princeton University Press, Princeton, NJ, } 2010.

\bibitem{garcia18}
J. Garc\'ia-Ardila, L. Garza and F. Marcell\'an.
\newblock {A canonical Geronimus transformation for matrix orthogonal polynomials}.
\newblock \emph{Lin. and Mult. Alg.}, 66 (2018), 357-381.

\bibitem{gelfand05}
I. Gelfand, S. Gelfand, V. Retakh and R. Wilson.
\newblock{quasi-determinants}.
\newblock \emph{Adv. Math.}, 193 (2005), 56-141.

\bibitem{gelfand95}
I. Gelfand, D. Krob, A. Lascoux, B. Leclerc, V. Retakh and J. Thibon.
\newblock Noncommutative symmetric functions.
\newblock \emph{Adv. Math.}, 112 (1995), 218-348.


\bibitem{gerasimov91}
A. Gerasimov, A. Marshakov, A. Mironov, A. Morozov and A. Orlov.
\newblock{Matrix models of two-dimensional gravity and Toda theory}.
\newblock \emph{Nucl. Phys. B}, 357 (1991), 565-618.

\bibitem{gilson07}
C. Gilson and J. Nimmo.
\newblock {On a direct approach to quasi-determinant solutions of a non-commutative KP equation}.
\newblock \emph{J. Phys. A: Math. Theor.}, 40 (2007), 3839.

\bibitem{gilson072}
C. Gilson, J. Nimmon and Y. Ohta.
\newblock{quasi-determinant solutions of a non-Abelian Hirota-Miwa equation.}
\newblock \emph{J. Phys. A: Math. Theor.}, 40 (2007), 12607.

\bibitem{grunbaum08}
F. Gr\"unbaum and M. de la Iglesia.
\newblock{Matrix valued orthogonal polynomials arising from group representation theory and a family of quasi-birth-and-death processes.}
\newblock \emph{SIAM J. Mat. Anal. Appl.}, 30 (2008), 741-761.

\bibitem{hamanaka07}
M. Hamanaka.
\newblock {Notes on exact multi-soliton solutions of non-commutative integrable hierarchies.}
\newblock JHEP, 02 (2007), 094.

\bibitem{ismail19}
M. Ismail, E. Keolimk and P. Rom\'an.
\newblock {Matrix valued Hermite polynomial, Burchnall formulas and non-abelian Toda lattice}.
\newblock \emph{Adv. Appl. Math.,} 110 (2019), 235-269.

\bibitem{krein49}
M. Krein.
\newblock {Infinite J-matrices and a matrix-moment problem}.
\newblock \emph{Dokl. Akad. Nauk. SSSR}, 69 (1949), 125-128.

\bibitem{krob95}
D. Krob and B. Leclerc.
\newblock Minor identities for quasi-determinants and quantum determinants.
\newblock \emph{Commun. Math. Phys.}, 169 (1995), 1-23.


\bibitem{li08}
C. Li and J. Nimmo.
\newblock{quasi-determinant solutions of a non-Abelian Toda lattice and kink solutions of a matrix sine-Gordon equation.}
\newblock \emph{Proc. R. Soc. A.},  464 (2008), 951-966.

\bibitem{li20}
S. Li and G. Yu.
\newblock{Christoffel transformations for (partial-)skew-orthogonal polynomials and applications}.
\newblock arXiv: 2008.00273.

\bibitem{lundmark05}
H. Lundmark and J. Szmigieski.
\newblock{Degasperis-Procesi peakons and the discrete cubic string.}
\newblock \emph{Int. Math. Res. Pap.}, 2 (2005), 53-116.

\bibitem{miranian05}
L. Miranian.
\newblock{Matrix valued orthogonal polynomials}.
\newblock PhD thesis, University of California, Berkeley. ProQuest Dissertations Publishing, 2005. 3196602.

\bibitem{moser75}
J. Moser.
\newblock {Finitely many mass points on the line under the influence of an exponential potential---An integrable system.}
\newblock \emph{Dynamical systems, theory and applications}, pp. 467-497, Lecture Notes in Phys., Vol. 38, Springer, Berlin, 1975.


\bibitem{peherstorfer07}
F. Peherstorfer, V. Spiridonov and A. Zhedanov.
\newblock{Toda chain, Stieltjes function, and
orthogonal polynomials}.
\newblock \emph{Theor. Math. Phys.}, 151 (2007), 505-528.

\bibitem{popowicz83}
Z. Popowicz.
\newblock {The generalized non-abelian Toda lattice}.
\newblock \emph{Z. Phys. C}, 19 (1983), 79-81.

\bibitem{retakh10}
V. Retakh and V. Rubtsov.
\newblock non-commutative Toda chains, Hankel quasi-determinants and Painlev\'e equation.
\newblock \emph{J. Phys. A: Math. Theor.}, 43 (2010), 505204.

\bibitem{sinap96}
A. Sinap and W. van Assche. 
\newblock{Orthogonal matrix polynomials and applications.}
\newblock \emph{J. Comp. Appl. Math.}, 66 (1996), 27-52.

\bibitem{spicer11}
P. Spicer, F. Nijhoff and P. van der Kamp.
\newblock {Higher analogues of the discrete-time Toda equation and the quotient-difference algorithm}.
\newblock \emph{Nonlinearity}, 24 (2011), 2229.


\end{thebibliography}
\end{document}